\newtheorem{theorem}{Theorem}
\definecolor{MyOra}{RGB}{255, 165, 0}
\definecolor{MyOra1}{RGB}{255,119,0}
\definecolor{MyBlue}{rgb}{0.25,0.5,0.75}
\definecolor{darkgreen}{rgb}{0.0,0.5,0.0}
\definecolor{darkspringgreen}{rgb}{0.05, 0.5, 0.06}
\definecolor{MyBlue1}{rgb}{0.0,0.0,0.55}
\colorlet{NextBlue}{MyBlue!20}
\colorlet{SecondBlue}{MyBlue!40}
\newtheorem{lemma}[theorem]{Lemma}
\newtheorem{proposition}[theorem]{Proposition}
\newtheorem{corollary}[theorem]{Corollary}
\theoremstyle{definition}
\newtheorem{definition}[theorem]{Definition}
\newtheorem{remark}[theorem]{Remark}
\let\Im\undefined
\DeclareMathOperator{\Im}{Im}
\numberwithin{equation}{section}
\numberwithin{theorem}{section}
\def\XXint#1#2#3{{\setbox0=\hbox{$#1{#2#3}{\int}$}
		\vcenter{\hbox{$#2#3$}}\kern-.5\wd0}}
\newcounter{rhp}
\newenvironment{rhp}[1][]{\refstepcounter{rhp}\par\medskip
   \noindent \textbf{Riemann-Hilbert problem~\therhp. #1} \rmfamily}{\medskip}
\date{}                     
\begin{document}

\title{Recurrence relations and the Christoffel-Darboux formula for elliptic orthogonal polynomials}

\author[1]{Harini Desiraju}
\author[2,3]{Sampad Lahiry}
\affil[1]{
Sydney mathematical Research Institute, University of Sydney, NSW 2008, Australia}
\affil[2]{
Department of Mathematics, Katholieke Universiteit Leuven, Celestijnenlaan
200 B bus 2400, 3001 Leuven,  Belgium}
\affil[3]{School of Mathematical and Statistics, The University of Melbourne, Victoria 3010, Australia}
\affil[  ]{harini.desiraju@sydney.edu.au, sampad.lahiry@kuleuven.be}
\maketitle

\begin{abstract}
In recent years, there has been significant progress in the theory of orthogonal polynomials on algebraic curves, particularly on genus 1 surfaces. In this paper, we focus on elliptic orthogonal polynomials and establish several of their fundamental properties. In particular, we derive general five-term and seven-term recurrence relations, which lead to a Christoffel–Darboux formula and the construction of an associated point process on the A-cycle of the torus. Notably, the recurrence coefficients in these relations are intricately linked through the underlying elliptic curve equation. Under additional symmetry assumptions on the weight function, the structure simplifies considerably, recovering known results for orthogonal polynomials on the complex plane.

\end{abstract}

\tableofcontents

\newpage
\section{Introduction}

The Christoffel-Darboux (CD) formula is central to the spectral theory of
orthogonal polynomials (OPs) (see \cite{simon2008christoffel} for a review), and is influential in several other areas such as integrable systems, random matrices, point processes, approximation theory \cite{nevai1986geza}, and data analysis \cite{lasserre2022christoffel}. The CD formula is a classical object of study for OPs on the complex plane. In the recent years, there have been important advances in the understanding of orthogonal polynomials and integrable systems on higher genus Riemann surfaces \cite{fasondini2023orthogonal,bertola2022nonlinear,DLR24,del2020isomonodromic,desiraju2025nonlinear}. Following these developments, an obvious question, due to the proximity between OPs and interacting particle systems, is if polynomials on higher genus surfaces could lead to novel phenomenon for point processes. In this paper, we achieve the first step towards this goal, namely constructing the CD  formula, for elliptic orthogonal polynomials (EOPs)  as constructed in \cite{DLR24} with degree independent weights.

Let us now introduce the EOPs we deal with in this paper. 


Consider the complex torus defined by the lattice with modular parameter \( \tau \in i\mathbb{R} \):
\begin{equation*}
    \mathbb{T} := \mathbb{C} / (\mathbb{Z} + \mathbb{Z} \cdot \tau).
\end{equation*}
An \emph{elliptic orthogonal polynomial} (EOP) \( P_k(z) \), $k\in \mathbb{Z}_{+}$ is defined as a doubly periodic meromorphic function on \( \mathbb{T} \) that has a unique pole of order \( k \) at $z=0$. Since no such function exists for $k=1$, we set $P_{1}(z)=0$ for the rest of our article. We also assume that \( P_k \) is \emph{monic} {\it i.e},  it admits the following Laurent expansion near the origin:
\begin{equation}
    P_k(z) = z^{-k}\left(1 + \mathcal{O}(z)\right), \quad \text{as } z \to 0.
\end{equation}
Such polynomials can be expressed in terms of a basis constructed from Weierstrass \( \wp \)-functions, given by
\begin{equation}
    \mathcal{B} = \{ \mathcal{E}_n \}_{n \geq 0,\, n \neq 1}, \qquad 
    \mathcal{E}_{2k} = \wp(z)^k, \qquad 
    \mathcal{E}_{2k+3} = -\tfrac{1}{2} \wp'(z)\, \wp(z)^k.
\end{equation}
Importantly, the polynomials \( \{ P_k \} \) satisfy the orthogonality relation
\begin{equation}
    \int_{\gamma}  P_k(z)\, P_m(z)\, {w}(z)\, dz = h_k\, \delta_{k,m}, \quad k, m \in \{0,2,3,\dots\},
\end{equation}
where \( h_k \) is a normalization constant, and \( \delta_{k,m} \) denotes the Kronecker delta.
The weight \( {w}(z) \) is assumed to be a positive function supported on the cycle
\[
    \gamma := \left[ \frac{\tau}{2}, 1 + \frac{\tau}{2} \right].
\]
Finally, analogous to the classic result on the complex plane \cite{fokas1992isomonodromy}, the polynomials $P_{k}$ also satisfies the following Riemann-Hilbert problem (RHP) \cite{DLR24,bertola2022nonlinear}.

\begin{rhp}\label{RHPY}
The RHP comprises of finding a $2\times 2$ matrix valued function $Y_n(z,\tau)$ with the following properties:
\begin{itemize}
    \item $Y_n(z,\tau)$ is analytic in $z\in \mathbb{T}\setminus ( \gamma \cup \{0\} )$.
    \item The following jump condition hold for $z\in \gamma$:
    \begin{align}\label{jump:Yn}
        Y_{n,+}(z,\tau) = Y_{n,-}(z,\tau) \left(\begin{array}{cc}
          1   &  w(z)\\
          0   & 1
        \end{array} \right),
    \end{align}
    where, following the standard notation,
    $\pm$ indicate the piece-wise analytic functions to the left and right side respectively of $\gamma$ w.r.t its orientation,
    \item In the limit $z\to 0$:
    \begin{align}
        Y_n(z,\tau)=\left( I_2 + \mathcal{O}(z) \right) \left( \begin{array}{cc}
         z^{-n}    & 0 \\
           0  & z^{n-2}
        \end{array}\right).
    \end{align}
\end{itemize}
\end{rhp}
Indeed, RHP \ref{RHPY} is uniquely solvable and has the form
\begin{equation}\label{RHPsol}
    Y_{n}(z,\tau)=\begin{pmatrix}
        P_n(z)& \mathcal{C}( P_n)(z)\\
        \frac{2\pi i}{h_{n-1}}  P_{n-1}(z)& \frac{2\pi i}{h_{n-1}}\mathcal{C}( P_{n-1})(z)
    \end{pmatrix} \quad n\geq 3,
\end{equation}
where $\mathcal{C}(.)$ is the Cauchy transform of a function $f\in L^1(\gamma)$, defined as
\begin{align*}
     \mathcal{C}(f)(z):= \int_{\gamma} f(w)(\zeta(w-z)-\zeta(w))\frac{dw}{2\pi i},
\end{align*}
 where $\zeta(.)$ denotes the Weierstrass $\zeta$-function.

For the reminder of the paper, we deal with orthonormal polynomials denoted by $\pi_n$. For convenience, let us establish the following definition.
\begin{definition}\label{def:orthonormalEOPs}
 The family of EOPs \( \{\pi_n(z)\}_{n \geq 0,\, n \neq 1} \) are orthonormal with respect to the weight function \( w(z) \) on the contour \( \gamma \).
\end{definition}
Indeed, they are obtained by the rescaling
\begin{equation}
    \pi_{n}=\frac{1}{\sqrt{h_n}}P_n.
\end{equation}
With respect to a positive weight function $w(z)$, the pairing
\begin{equation}
    \langle f,g\rangle =\int f(z)g(z)w(z)dz
\end{equation}
defines an inner product on the space of polynomials and the family $\{\pi_k\}_{k\geq0,k\neq 1}$ can be obtained by Gram-Schmidt procedure. Due to the particular choice of the $\sf A$-cycle $\gamma$, and since  $\wp$ and $\wp'$ are real on $\gamma$, we obtain that the polynomials $\pi_n$ are real on the support $\gamma$. 



\subsection{Overview of the results}

Our first set of results are Theorem \ref{thm:five-term rec} and \ref{thm:sevent} which establish a five-term and a seven-term recurrence relation respectively for EOPs with generic degree-independent weights. In fact, we observe that these relations are coupled. We make this precise by proving an elliptic polynomial analogue of the Shohat-Favard theorem in Section \ref{subsec:ShohatFavard}, wich recovers the class of EOPs using the coupled recurrence relations.  Next, with the results of Section~\ref{sec:rec-rel}, we establish a Christoffel--Darboux formula for EOPs in Theorem \ref{thm:CD}. This construction leads to a natural determinantal point process supported on a cycle on the torus. We further express the CD kernel in terms of the associated RHP \ref{RHPY} in Proposition \ref{prop:CD-RHP}. 

As a final part of the paper, in Section \ref{sec:symm-weight}, we consider the case where the weight function is symmetric. Under this assumption, the orthogonal polynomials split into families of odd and even functions. This structural symmetry reduces the five-term recurrence relation to a three-term recurrence relation for odd and even polynomials respectively, while the seven-term relation reduces to a four-term relation which describes the odd polynomials in terms of even ones and vice versa. This phenomenon is described in Propositions \ref{prop:3t} and \ref{prop:4t}. The corresponding CD formula also simplifies significantly as shown in \ref{subsec:CD-even}. In fact, it turns out to be the elliptic analogue, under the transformation $z\to \wp(z)$ of the classical CD formula for OPs on the complex plane. For completeness, we demonstrate how the zeros of the even EOPs correspond to the eigenvalues of certain Jacobi matrices, and we establish interlacing properties of these zeros in Section \ref{subsec: Jacobi-even}. Finally, we also provide a Heine type formula of even EOPs in Section \ref{subsec:Heine-even}. While for EOPs with non-symmetric weights, the interlacing of zeros and Heine type formulas are not accessible with present methods, we hope to obtain insights from the case of even EOPs.

Beyond their intrinsic analytic interest, these results suggest promising applications in the spectral theory of integrable systems and elliptic generalizations of random matrix ensembles.

\subsection{Outlook}\label{subsec:outloook}
Several interesting questions remain open following our work. We detail few of them.
\begin{itemize}
    \item As mentioned above, one can aim to understand the interlacing properties of the zeros of EOPs for non-symmetric weights. From the five term recurrence relation, we obtain that the corresponding Jacobi matrix is symmetric, pentadiagonal, and the Christoffel-Darboux formula is more involved than usual. It would be interesting to find an analogous result in the elliptic situation.
   
    \item  While our main aim here was to derive the CD formula, the exploration of the associated determinantal point process (DPP) is a natural next step. We describe the existence of such a process, and an in-depth analysis of such a DPP and its relation to the results of \cite{bertola2022critical} will be carried out in a future study.

    \item The expression of the CD kernel in terms of the solution of the RHP in \eqref{prop:CD-RHP} and the canonical determinantal point process constructed in Section \ref{sec:CD} paves the way for the double scaling limit of the  kernel and could lead to novel universality results. One can achieve this in a tedious yet straightforward manner by adapting the nonlinear steepest descent analysis in \cite{bertola2022nonlinear} in an appropriate manner. For the EOPs considered here, the weight is degree-independent which would lead to absence of local parametrices. A more challenging and interesting questions comes when one is able to generalise the present story to include $n$-dependent weights, for which the associated RHP is not known in the literature. We plan to address this questions in future.
    \item Finally, it is important to note that there exist at least two other notions of CD kernels on Riemann surfaces, in context of matrix valued orthogonal polynomials (MVOPs) \cite{charlier2021matrix}, and bi-orthogonal matrix valued polynomials \cite{bertola2023abelianization, charlier2021matrix}. These papers hinge on the relation between matrix and scalar valued orthogonality. It would be interesting to relate the notion of orthogonal polynomials on Riemann surfaces \cite{bertola2022nonlinear, DLR24} to MVOPs. If such a connection exists, it could yield new insights into lattice models \cite{groot2021matrix} and may help relate the Christoffel-Darboux kernel studied in this paper to those appearing in the matrix-valued setting.
    
\end{itemize}
\section{Recurrence relations for EOPs for generic weight}\label{sec:rec-rel}
In this section, we will first show that the EOPs $\pi_n(z)$ satisfy five-term and seven-term recurrence relations. We elaborate on the relation between the two expressions and explain the relation between them. Finally, we present an elliptic version of Shohat-Favard theorem, which constructs a family of EOPs given a recurrence relation. Consequently, we show that the five-term recurrence reproduces polynomials $\pi_n(z)$ for $n\geq 3$, while the seven-term recurrence relation is needed to construct EOPs for $n\geq 0$. This reflects our definition of EOPs where $\pi_1(z)=0$.

\subsection{Coupled recurrence relations}
We now derive the five and seven-term recurrence relaitons individually.
\subsubsection{Five-term recurrence relation}
\begin{theorem}\label{thm:five-term rec}
Consider the family of EOPs in Definition \ref{def:orthonormalEOPs}. They satisfy a five-term recurrence relation for $n\geq 2$
\begin{equation}\label{eq:5term}
\begin{aligned}
    \wp(z)\pi_n(z) =\ & a_{n+1}\pi_{n+2}(z) + b_{n+1}\pi_{n+1}(z) + c_n\pi_n(z) \\
    &+ b_n\pi_{n-1}(z) + a_{n-1}\pi_{n-2}(z),
\end{aligned}
\end{equation}
where the coefficient $c_{k,n}$ is defined by the orthogonality of the polynomials in \eqref{eq:coeffs},
\begin{align}\label{def:coeffsabc}
a_{n+1} = c_{n+2,n} = c_{n,n+2}, && b_{n+1} = c_{n+1,n} = c_{n,n+1}, &&  c_n = c_{n,n},
\end{align}
and \( a_{n} > 0 \).
\end{theorem}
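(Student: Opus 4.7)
The plan is to expand $\wp(z)\pi_n(z)$ in the orthonormal basis $\{\pi_k\}$ and use the pole orders at $z=0$, together with the symmetry of the inner product, to collapse the expansion to five terms.

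First I would note that $\wp(z)\pi_n(z)$ is a doubly periodic meromorphic function on $\mathbb{T}$ whose only pole is at $z=0$, of order $n+2$ (the function $\wp$ contributes a double pole with vanishing residue, and $\pi_n$ contributes the pole of order $n$). Hence $\wp\pi_n$ lies in the finite–dimensional subspace $V_{n+2}:=\mathrm{span}\{\mathcal{E}_j:0\leq j\leq n+2,\,j\neq 1\}$. Since the $\pi_k$ are obtained from $\mathcal{B}$ by Gram–Schmidt, $V_{n+2}=\mathrm{span}\{\pi_k:0\leq k\leq n+2,\,k\neq 1\}$, and on this subspace the family $\{\pi_k\}$ is orthonormal with respect to the pairing $\langle f,g\rangle=\int_\gamma f\,g\,w\,dz$. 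Therefore
\[
\wp(z)\pi_n(z)=\sum_{k\in\{0,2,3,\ldots,n+2\}} c_{k,n}\,\pi_k(z),\qquad c_{k,n}=\int_\gamma \wp(z)\pi_n(z)\pi_k(z)w(z)\,dz.
\]

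Next I would prove the banded structure. The integral defining $c_{k,n}$ is obviously symmetric in $k$ and $n$, so $c_{k,n}=c_{n,k}$. For $k\leq n-3$, the product $\wp(z)\pi_k(z)$ has a pole of order at most $k+2\leq n-1$ at $z=0$ and therefore lies in $V_{k+2}\subset \mathrm{span}\{\pi_j: j\leq k+2\}$, which is orthogonal to $\pi_n$. Hence $c_{k,n}=\langle \pi_n,\wp\pi_k\rangle=0$, and by the symmetry we also get $c_{k,n}=0$ for $k\geq n+3$. Only the five coefficients $c_{n\pm 2,n},\ c_{n\pm 1,n},\ c_{n,n}$ survive, yielding \eqref{eq:5term} with the identifications \eqref{def:coeffsabc} (for $n=2,3$ the formally-appearing index $k=1$ is harmless since $\pi_1\equiv 0$). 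The positivity $a_n>0$ will follow by matching the most singular Laurent coefficient at $z=0$: from $\pi_n(z)=h_n^{-1/2}z^{-n}(1+O(z))$ and $\wp(z)=z^{-2}+O(z^2)$, the coefficient of $z^{-n-2}$ on both sides of \eqref{eq:5term} gives $a_{n+1}=\sqrt{h_{n+2}/h_n}$, which is strictly positive because the weight $w$ is positive on $\gamma$ and hence each $h_k>0$.

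The only real subtlety — and the one I would be careful about — is the interface between the basis $\mathcal{B}$ (which skips the non-existent index $1$) and the orthogonalized family $\{\pi_k\}$, i.e.\ that the ascending chain of subspaces $V_0\subset V_2\subset V_3\subset\cdots$ is correctly reproduced by the Gram–Schmidt construction of the $\pi_k$, and that $\wp\pi_n$ genuinely has no $z^{-1}$ contribution (automatic, since any elliptic function with a single pole in a fundamental domain has zero residue there). Once this is in place, the rest is the standard banded-matrix argument from the theory of orthogonal polynomials, transplanted to the elliptic setting through the identification ``degree $\leftrightarrow$ order of pole at $0$.''
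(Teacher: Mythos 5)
Your proposal is correct and follows essentially the same route as the paper's proof: expand $\wp\pi_n$ in the orthonormal basis, kill the coefficients $c_{k,n}$ for $k\leq n-3$ by the pole-order (degree) argument plus symmetry of the integrand, and fix the sign of $a_{n+1}=\sqrt{h_{n+2}/h_n}$ by matching the leading Laurent coefficient at $z=0$. The extra care you take about the skipped index $k=1$ and the residue-free double pole of $\wp$ is a welcome refinement but does not change the argument.
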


\begin{proof}
Let \( \pi_n(z) \) be an orthonormal polynomial of degree \( n \). Since \( \wp(z) \) is an even elliptic function with a double pole at \( z = 0 \), the product \( \wp(z)\pi_n(z) \) is a meromorphic function on the torus with a pole of order \(n+ 2 \) at \( z = 0 \), and is therefore an elliptic polynomial of degree \( n + 2 \).

As such, \( \wp(z)\pi_n(z) \) can be expanded in terms of the orthonormal basis \( \{ \pi_k(z) \} \), up to degree \( n+2 \):
\begin{equation}\label{eq:expansion}
    \wp(z)\pi_n(z) = \sum_{k=0}^{n+2} c_{k,n} \pi_k(z),
\end{equation}
where the coefficients \( c_{k,n} \) are defined by orthogonality
\begin{equation}\label{eq:coeffs}
    c_{k,n}  := \int_{\gamma} \wp(z)\pi_n(z)\pi_k(z) \, w(z)\, dz.
\end{equation}
Now observe that for any \( k < n - 2 \), the function \( \wp(z)\pi_k(z) \) has a degree strictly less than \( n \), and by orthogonality, the right-hand side of \eqref{eq:coeffs} vanishes for these values. This implies that \( c_{k,n} = 0 \) for \( k < n - 2 \). Consequently, the expansion \eqref{eq:expansion} reduces to
\begin{equation}\label{eq:5term-expanded}
\begin{aligned}
    \wp(z)\pi_n(z) =\ & c_{n+2,n}\pi_{n+2}(z) + c_{n+1,n}\pi_{n+1}(z) + c_{n,n}\pi_n(z) \\
    &+ c_{n-1,n}\pi_{n-1}(z) + c_{n-2,n}\pi_{n-2}(z).
\end{aligned}
\end{equation}

We now show the symmetry of the coefficients in the expression above. Observe that
\begin{align}
    c_{n+2,n} = \int_{\gamma} \wp(z)\pi_n(z)\pi_{n+2}(z)\, w(z)\, dz = \int_{\gamma} \wp(z)\pi_{n+2}(z)\pi_n(z)\, w(z)\, dz = c_{n,n+2},
\end{align}
by the symmetry of the integrand. Hence we may define:
\[
    a_{n+1} := c_{n+2,n} = c_{n,n+2}, \quad b_{n+1} := c_{n+1,n} = c_{n,n+1},
\]
which gives us the simplified five-term recurrence relation \eqref{eq:5term}.

Finally, we determine the sign of \( a_{n+1} \). Since the orthonormal polynomials \( \pi_n \) behave as
\[
    \pi_n(z) = \frac{z^{-n}}{\sqrt{h_n}}(1 + \mathcal{O}(z)) \quad \text{for } z \to 0,
\]
and \( \wp(z) = z^{-2} + \mathcal{O}(1) \) near \( z = 0 \), it follows that:
\[
    \wp(z)\pi_n(z) = \frac{z^{-n-2}}{\sqrt{h_n}}(1 + \mathcal{O}(z)).
\]
Matching the above asymptotic behaviour to the leading order term of the expansion \eqref{eq:5term}, which comes from \( a_{n+1}\pi_{n+2}(z) \), we must have
\[
    \frac{1}{\sqrt{h_n}} = a_{n+1} \frac{1}{\sqrt{h_{n+2}}} \quad \Rightarrow \quad a_{n+1} = \sqrt{\frac{h_{n+2}}{h_n}} > 0.
\]
Consequently, $a_n>0$. This concludes the proof.
\end{proof}

We can simplify the five-term relation further. Let us introduce the vector-valued polynomial
\[
\Pi_{2n}(z) = 
\begin{bmatrix}
    \pi_{2n+1}(z) \\
    \pi_{2n}(z)
\end{bmatrix}.
\]
Then the five-term recurrence relation~\eqref{eq:5term} may be recast as a three-term matrix recurrence of the form
\begin{equation} \label{eq:matrix-recurrence}
    \wp(z)\Pi_{2n}(z) = \mathcal{A}_{2n+2} \Pi_{2n+2}(z) + \mathcal{B}_{2n} \Pi_{2n}(z) + \mathcal{A}_{2n}^{\mathsf{T}} \Pi_{2n-2}(z),
\end{equation}
where the recurrence matrices are explicitly given by
\begin{equation} \label{eq:rec-matrices}
    \mathcal{A}_{2n+2} = 
    \begin{bmatrix}
        a_{2n+2} & b_{2n+2} \\
        0 & a_{2n+1}
    \end{bmatrix}, 
    \qquad
    \mathcal{B}_{2n} = 
    \begin{bmatrix}
        c_{2n+1} & b_{2n+1} \\
        b_{2n+1} & c_{2n}
    \end{bmatrix}.
\end{equation}

\subsubsection{Seven-term recurrence relation}

As opposed to the above description, we now consider the expression $\wp'(z)\pi_{n}(z)$ which is a polynomial of degree in $n+3$. We expand $\wp'(z)\pi_{n}(z)$ in the basis of orthonormal polynomials
\begin{equation}\label{expansion2}
     \wp'(z)\pi_{n}(z)=\sum_{i=0}^{n+3}c_{i,n}\pi_{i}(z) 
\end{equation}
For $i<n-3$ we have $\wp'(z)\pi_{i}(z)$ is a polynomial of degree $i+3<n$. By a similar argument as above, we have the following theorem.
\begin{theorem}\label{thm:sevent}
Let \( \{\pi_n(z)\}_{n \geq 0,\, n \neq 1} \) be a family of orthonormal elliptic polynomials with respect to a weight function \( w(z) \) on a contour \( \gamma \). Then multiplication by the derivative of the Weierstrass function, \( \wp'(z) \), yields the seven-term recurrence relation:
\begin{equation}\label{eq:7term}
\begin{aligned}
\wp'(z)\pi_n(z) =\ &\, p_{n+3}\pi_{n+3}(z) + q_{n+2}\pi_{n+2}(z) + r_{n+1}\pi_{n+1}(z) \\
&+ s_n\pi_n(z) + r_n\pi_{n-1}(z) + q_n\pi_{n-2}(z) + p_n\pi_{n-3}(z),
\end{aligned}
\end{equation}
where \( p_{n+3} = -2\sqrt{h_{n+3}/h_n} < 0 \), and
\[
\int_{\gamma} \wp'(z)\, \pi_n(z)\pi_k(z)\, w(z)\, dz =
\begin{cases}
q_{n+2}, & k = n+2, \\
r_{n+1}, & k = n+1, \\
s_n,     & k = n,\\
r_n      & k=n-1,\\
q_n &k=n-2.\\
\end{cases}
\]
\end{theorem}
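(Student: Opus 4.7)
The plan is to mirror the proof of Theorem \ref{thm:five-term rec} with $\wp(z)$ replaced by $\wp'(z)$. Since $\wp'(z)$ is an odd elliptic function with a unique pole of order $3$ at $z=0$, the product $\wp'(z)\pi_n(z)$ is doubly periodic and meromorphic on $\mathbb{T}$ with a single pole of order $n+3$ at the origin, hence an elliptic polynomial of degree $n+3$. I would therefore expand it in the orthonormal basis as
\begin{equation*}
    \wp'(z)\pi_n(z) = \sum_{k=0}^{n+3} \widetilde{c}_{k,n}\, \pi_k(z), \qquad \widetilde{c}_{k,n} := \int_\gamma \wp'(z)\pi_n(z)\pi_k(z)\, w(z)\, dz,
\end{equation*}
as announced in \eqref{expansion2}.

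Next I would truncate the sum by orthogonality exactly as in the five-term case: for $k < n-3$, the function $\wp'(z)\pi_k(z)$ is an elliptic polynomial of degree $k+3 < n$, so orthogonality of $\pi_n$ against all elliptic polynomials of strictly smaller degree forces $\widetilde{c}_{k,n}=0$. Only the seven indices $k \in \{n-3,\ldots,n+3\}$ survive. I would then invoke symmetry of the integrand under $k\leftrightarrow n$ to pair coefficients: $\widetilde{c}_{n+j,n} = \widetilde{c}_{n,n+j}$ for $j=1,2,3$. Setting
\begin{equation*}
    p_{n+3} := \widetilde{c}_{n+3,n},\quad q_{n+2} := \widetilde{c}_{n+2,n},\quad r_{n+1} := \widetilde{c}_{n+1,n},\quad s_n := \widetilde{c}_{n,n},
\end{equation*}
the symmetry identifies the coefficients of $\pi_{n-1}$, $\pi_{n-2}$, $\pi_{n-3}$ as $r_n$, $q_n$, $p_n$ respectively (i.e., the same sequences re-indexed at $n$ rather than $n+\text{const}$), which is exactly the labelling in \eqref{eq:7term}.

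Finally, to pin down the sign and value of $p_{n+3}$, I would match leading Laurent behaviour at $z=0$. Using $\pi_n(z) = z^{-n}(1+\mathcal{O}(z))/\sqrt{h_n}$ together with $\wp'(z) = -2z^{-3}+\mathcal{O}(z)$, the left-hand side of \eqref{eq:7term} behaves as $-2 z^{-n-3}/\sqrt{h_n}$ at leading order, while on the right only the term $p_{n+3}\pi_{n+3}(z)\sim p_{n+3}z^{-n-3}/\sqrt{h_{n+3}}$ contributes at this order; equating yields $p_{n+3} = -2\sqrt{h_{n+3}/h_n} < 0$.

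I do not anticipate a substantive obstacle, since the structure of the argument is fixed by Theorem \ref{thm:five-term rec}. The only subtle point is the leading-order matching: one must track the correct principal part $-2/z^3$ of $\wp'$ (rather than $1/z^3$), which is what flips the sign of the extreme coefficient $p_{n+3}$ relative to the positive coefficient $a_{n+1}$ in the five-term case. A minor bookkeeping caveat is that for small $n$ some of the indices in $\{n-3,\ldots,n+3\}$ may fall outside the admissible set $\{0,2,3,\ldots\}$ (recall $\pi_1\equiv 0$), in which case the corresponding coefficient is simply absent; this edge case does not affect the argument for generic $n$ and is consistent with the Shohat–Favard discussion promised in Section \ref{subsec:ShohatFavard}.
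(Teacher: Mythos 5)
Your proposal is correct and follows essentially the same route as the paper: expand $\wp'(z)\pi_n(z)$ (an elliptic polynomial of degree $n+3$) in the orthonormal basis, kill the coefficients with $k<n-3$ by orthogonality, use the symmetry $\widetilde{c}_{k,n}=\widetilde{c}_{n,k}$ to pair the sub- and super-diagonal coefficients, and match the principal part $\wp'(z)=-2z^{-3}+\mathcal{O}(z)$ to get $p_{n+3}=-2\sqrt{h_{n+3}/h_n}<0$. The paper only sketches this ("by a similar argument as above"), so your write-up is if anything more complete.
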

\begin{remark}
    It is important to note that for EOPs, we need {\it both} the five and seven-term recurrence relations simultaneously. The interplay between the two relations can be seen through the simple case of $n=0$ for \eqref{7t} and $n=1$ for \eqref{5t}, which define a simultaneous system of linear equations that determine $\pi_{2}(z)$ and $\pi_3(z)$. The derivation of higher degree polynomials then follows.
\end{remark}
\begin{remark}
    The coefficients of seven-term and five-term are related in a non-trivial way by the elliptic curve. 
    A trivial condition is that 
    \begin{equation}
       4 a_{n+5}a_{n+3}a_{n+1}=p_{n+6}p_{n+3}.
    \end{equation}
    We refer the reader to Appendix \ref{App:rel_rec_coeff} for a detailed list of relations.
\end{remark}

\subsection{Elliptic analogue of the Shohat-Favard Theorem}\label{subsec:ShohatFavard}
At first glance, one might question the necessity of the seven-term recurrence when the five-term recurrence already exists. The seven-term recurrence, however, plays an equally crucial role. The reason for this lies in the construction of the polynomials from the recurrence coefficients. While the five-term recurrence allows for the construction of polynomials up to a certain degree, the polynomial \( \pi_3 \) cannot be generated using just the five-term recurrence because the polynomial \( \pi_1 \) is undefined. By adding the seven-term recurrence, we can generate polynomials of any degree, even from the sequence \( \pi_2 \) and \( \pi_3 \). This ensures the consistency of the recurrence coefficients and ensures that the polynomials are well-defined. In fact, the orthogonality measure becomes uniquely determined when the recurrence coefficients satisfy this condition. This fact is rigorously addressed in the elliptic version of the Shohat-Favard theorem, which we state and prove below.
\begin{definition}
    Let $\mathcal{P}$ be the vector space of all elliptic polynomials. We define a moment functional to be a linear functional $\mathcal{L}:\mathcal{P}\rightarrow\mathbb R$. We say $\mathcal{L}$ is positive definite if for all $f\in \mathcal{P}$ we have $\mathcal{L}(f^2)>0$.
\end{definition}
{In what follows we show that if we inductively build our family of polynomials according to the five and seven term recurrence relation consistently, then there exists a moment functional, for which the polynomials are orthogonal.
With the proof that the functions satisfying the recurrence relation are orthogonal w.r.t to a weight function, one obtains the representation of the coefficients such as \eqref{eq:coeffs} and \eqref{def:coeffsabc}.}
\begin{theorem}
Let \( \{a_n>0\}, \{b_n\}, \{c_n\}, \{p_n>0\}, \{q_n\}, \{r_n\} \{s_n\}\) be sequences, and let \( \lambda_1>0 \) be a constant. According to our definition of EOPs (see Def \ref{def:orthonormalEOPs}), we further assume  $\pi_1(z)=0$, and $\pi_{k}=0$ for $k<0$. 
Consider the five and seven-term recurrence relations for the polynomials $\pi_k(z)$ respectively

\begin{equation}\label{5t}
   - a_{n-1} \pi_n(z) = b_{n-1} \pi_{n-1}(z) + (c_{n-2} - \wp(z)) \pi_{n-2}(z) + b_{n-2} \pi_{n-}(z) + a_{n-3} \pi_{n-4}(z),
\end{equation}
and
\begin{equation}\label{7t}
\begin{aligned}
-p_n \pi_n(z) = q_{n-1} \pi_{n-1}(z) + r_{n-2} \pi_{n-2}(z) + ( s_{n-3}-&\wp'(z)) \pi_{n-3}(z) + r_{n-3} \pi_{n-4}(z)\\& + q_{n-3} \pi_{n-5}(z) + p_{n-3} \pi_{n-6}(z).
\end{aligned}\end{equation}
 If the system is consistent, then there exists a unique positive definite moment functional \( \mathcal{L}: \mathcal{P} \to \mathbb{R} \) such that:
\[
\mathcal{L}(1) = \lambda_1, \quad \mathcal{L}(\pi_n(z) \pi_m(z)) = \lambda_n \delta_{m,n},
\quad \lambda_n>0.\]
\end{theorem}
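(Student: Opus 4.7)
The plan is to mirror the classical Shohat--Favard theorem, adapted to the fact that the space of elliptic polynomials $\mathcal{P}$ is generated by two operators---multiplication by $\wp$ and by $\wp'$---rather than one, so the coupled recurrences \eqref{5t} and \eqref{7t} must be used in tandem. As a first step I would check that the system, started from the given constant $\pi_0$ together with $\pi_1=0$ and $\pi_k=0$ for $k<0$, inductively produces $\pi_n\in\mathcal{P}$ of exact degree $n$ for each $n\geq 0,\,n\neq 1$: equation \eqref{5t} at $n=2$ gives $\pi_2=(\wp-c_0)\pi_0/a_1$ (degree $2$ from the double pole of $\wp$), equation \eqref{7t} at $n=3$ gives $\pi_3$ (degree $3$ from the triple pole of $\wp'$), and for $n\geq 4$ either recurrence determines $\pi_n$---with the standing consistency hypothesis saying precisely that the two choices agree. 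Hence $\{\pi_n\}_{n\geq 0,\,n\neq 1}$ is a basis of $\mathcal{P}$.

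Having the basis, I would define $\mathcal{L}\colon\mathcal{P}\to\mathbb{R}$ by $\mathcal{L}(\pi_0)=\pi_0\lambda_1$ and $\mathcal{L}(\pi_n)=0$ for $n\geq 2$, extended linearly; this forces $\mathcal{L}(1)=\lambda_1$. The core of the argument is the orthogonality claim $\mathcal{L}(\pi_n\pi_m)=\lambda_n\delta_{n,m}$ with $\lambda_n>0$, established by strong induction on $\max(n,m)$. The base cases $\mathcal{L}(\pi_0^2)=\pi_0^2\lambda_1=:\lambda_0>0$ and $\mathcal{L}(\pi_0\pi_m)=0$ for $m\geq 2$ are immediate from the definition. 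For the inductive step I would solve \eqref{5t} (respectively \eqref{7t}) for $\pi_{N+1}$ in terms of $\wp\pi_{N-1}$ (respectively $\wp'\pi_{N-2}$) and lower $\pi_k$'s. The only non-trivial contribution to $\mathcal{L}(\pi_{N+1}\pi_m)$ then comes from $\mathcal{L}(\pi_{N-1}\cdot\wp\pi_m)$, which by a second application of the same recurrence becomes a finite sum of inner products $\mathcal{L}(\pi_{N-1}\pi_k)$ with $k\in\{m-2,\ldots,m+2\}$. For generic $m$ these pairings lie in the inductive range and evaluate either to $0$ or to previously established norms. Positivity of $\lambda_{N+1}$ is preserved because it appears as a strictly positive multiple (through $a_N>0$ or $p_{N+1}>0$) of an earlier $\lambda_k$.

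The hardest part is the narrow band of indices $m$ close to $N+1$ for which the reductions just described produce polynomials of index exceeding $N$ and hence escape the inductive hypothesis. For such $m$ both recurrences must be invoked simultaneously, and the consistency hypothesis on the coefficient sequences---which is the recurrence-level incarnation of the elliptic curve relation $(\wp')^2=4\wp^3-g_2\wp-g_3$, with a sample of explicit consequences listed in Appendix~\ref{App:rel_rec_coeff}---is precisely what forces the five-term and seven-term computations of $\mathcal{L}(\pi_{N+1}\pi_m)$ to coincide. Without this compatibility the two recurrences could not simultaneously constrain a single moment functional, and this is the exact point at which the joint use of \eqref{5t} and \eqref{7t} flagged in the remark after Theorem~\ref{thm:sevent} enters the proof.

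Uniqueness is then automatic: the conditions $\mathcal{L}(1)=\lambda_1$ and $\mathcal{L}(\pi_n)=0$ for $n\geq 2$ pin down $\mathcal{L}$ on the basis $\{\pi_n\}_{n\geq 0,\,n\neq 1}$, and positive definiteness is immediate since any $f=\sum_n c_n\pi_n\in\mathcal{P}$ satisfies $\mathcal{L}(f^2)=\sum_n c_n^2\lambda_n>0$ whenever $f\neq 0$.
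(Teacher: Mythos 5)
Your overall architecture (define $\mathcal{L}$ by annihilating the $\pi_n$ for $n\geq 2$, then verify orthogonality) matches the paper's, but your mechanism for the orthogonality step --- strong induction on $\max(n,m)$ for the pairings $\mathcal{L}(\pi_n\pi_m)$, moving $\wp$ or $\wp'$ across the pairing --- has a genuine gap exactly where you flag ``the hardest part'', and the consistency hypothesis does not repair it. Take $\mathcal{L}(\pi_{N+1}\pi_N)$. Solving \eqref{5t} for $\pi_{N+1}$ leaves the term $\mathcal{L}(\wp\,\pi_{N-1}\pi_N)$; expanding $\wp\pi_{N-1}$ back via \eqref{eq:5term} reproduces $a_N\mathcal{L}(\pi_{N+1}\pi_N)$ on the right-hand side and the identity collapses to $0=0$, while expanding $\wp\pi_N$ instead introduces $\mathcal{L}(\pi_{N-1}\pi_{N+2})$, which lies even further outside the inductive range. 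The seven-term recurrence behaves the same way (it trades $\mathcal{L}(\pi_{N+1}\pi_N)$ for pairings such as $\mathcal{L}(\pi_{N-2}\pi_{N+3})$). So the obstruction is not that the five- and seven-term computations might disagree --- which is what consistency would cure --- but that neither computation determines the near-diagonal moments at all. As written, your induction does not close.

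The paper avoids this entirely by never recursing on pairings: having set $\mathcal{L}(\pi_j)=0$ for $j\geq 1$, it shows directly that $\mathcal{L}(\wp^i\pi_{2k})=0$ for $i\leq k-1$ and $\mathcal{L}(\wp'\wp^i\pi_{2k})=0$ for $i\leq k-2$ (with analogous statements for odd index), because the recurrences expand these products as combinations of $\pi_j$ with $j\geq 2k-2i\geq 2$, every one of which is killed by $\mathcal{L}$ by definition --- a one-step verification requiring no induction hypothesis. Since any $\pi_m$ with $m<n$ is a linear combination of the basis elements $\wp^i$ and $\wp'\wp^j$ of degree at most $m$, this immediately yields $\mathcal{L}(\pi_m\pi_n)=0$ for all $m<n$, near-diagonal cases included; the diagonal moments and their positivity then follow from the leading-order relations such as $\mathcal{L}(\wp^n\pi_{2n})=a_{2n-1}\mathcal{L}(\wp^{n-1}\pi_{2n-2})$ together with $a_n,p_n>0$. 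To salvage your route you should replace the induction on pairings by this ``test against the monomial basis $\mathcal{B}$'' argument; the rest of your proposal (construction of the $\pi_n$ from the two recurrences, uniqueness, and positive definiteness) is sound and consistent with the paper.
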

\begin{proof}
We define the moment functional \( \mathcal{L} \) on the space of polynomials by specifying its action on the orthogonal polynomial sequence \( \{\pi_n(z)\}_{n \geq 0} \) as follows:
\[
\mathcal{L}(1) = \lambda_1, \qquad \mathcal{L}(\pi_n(z)) = 0 \quad \text{for all } n \geq 1.
\]
This definition extends uniquely to all polynomials via linearity.

Let \( n = 2k \) be even, and fix an integer \( i \leq k - 1 \). From the five-term recurrence relation satisfied by the orthogonal polynomials (denoted as equation~\eqref{5t}), we can write:
\[
\wp(x)^i \pi_{2k}(z) = \sum_{j = 2k - 2i}^{2k + 2i} d_j \pi_j(z), \quad \text{for some } d_j \in \mathbb{R}.
\]
Applying \( \mathcal{L} \) to both sides and using the defining property that \( \mathcal{L}(\pi_j(z)) = 0 \) for \( j \geq 1 \), we find:
\[
\mathcal{L}(\wp(x)^i \pi_{2k}(z)) = \sum_{j = 2k - 2i}^{2k + 2i} d_j \mathcal{L}(\pi_j(z)) = 0.
\]

Now consider the expression \( \mathcal{L}(\wp'(z) \wp^i(z) \pi_{2k}(z)) \) with $i\leq k-2$. Using the same recurrence, we write:
\begin{equation}\label{eq:sum_wpprime}
\mathcal{L}(\wp'(z) \wp^i(z) \pi_{2k}(z)) = \sum_{j = 2k - 2i}^{2k + 2i} \mathcal{L}(\wp'(z) d_j \pi_j(z)).
\end{equation}
Each term of the form \( \wp'(z) \pi_j(z) \) can be expressed using the seven-term recurrence relation ~\eqref{7t}. For example, consider the highest-degree term for \( i = k - 2 \), namely:
\[
\wp'(z) \pi_4(z) = \sum_{j = 2}^{7} e_j \pi_j(z), \quad \text{for some } e_j \in \mathbb{R}.
\]
Then,
\[
\mathcal{L}(d_4 \wp'(z) \pi_4(z)) = \sum_{j = 2}^{7} e_j \mathcal{L}(\pi_j(z)) = 0.
\]
As all such terms in~\eqref{eq:sum_wpprime} vanish by the same reasoning, we conclude:
\[
\mathcal{L}(\wp'(z) \wp^i(z) \pi_{2k}(z)) = 0 \quad \text{for all } i \leq k - 2.
\]

A completely analogous argument applies in the case where \( n = 2k + 1 \) is odd. Thus, for any \( m < n \), the inner product induced by \( \mathcal{L} \) satisfies:
\[
\mathcal{L}(\pi_m(z) \pi_n(z)) = 0,
\]
which shows that \( \{\pi_n(z)\} \) is an orthogonal system under \( \mathcal{L} \).

To study the moments, observe that applying \( \mathcal{L} \) to the recurrence for \( \pi_{2n}(z) \), we get:
\[
\mathcal{L}(\wp(z)^n \pi_{2n}(z)) = a_{2n-1} \mathcal{L}(\wp(z)^{n-1} \pi_{2n-2}(z)).
\]
This recurrence allows us to compute moments recursively from lower degrees.

Similarly,
\[
\mathcal{L}(\wp'(z) \wp^n(z) \pi_{2n+3}(z)) = a_{2n+2} \mathcal{L}(\wp'(z) \wp^{n-1}(z) \pi_{2n+1}(z)),
\]

Since the action of \( \mathcal{L} \) vanishes on all polynomials orthogonal to the constant function and satisfies a recursive moment structure on the diagonal, the orthogonality is complete and well-posed. Furthermore, assuming the recurrence coefficients \( a_n, p_n > 0 \), the functional \( \mathcal{L} \) is also positive definite.
\end{proof}

\section{Christoffel-Darboux formula}\label{sec:CD}
In this section, we first use the five-term recurrence relation in Theorem \ref{thm:five-term rec} to obtain a Christoffel-Darboux (CD) formula for EOPs. We then express it in terms of the solution of the RHP \ref{RHPY}. 

\begin{theorem}[Christoffel-Darboux Formula for Elliptic Orthogonal Polynomials]\label{thm:CD}
Consider the EOPs $\pi_n(z)$ described in Definition \ref{def:orthonormalEOPs}. Then the Christoffel-Darboux kernel
\[
\widehat{K_n}(x,y) := \sum_{j=0}^{n-2} \pi_j(x)\pi_j(y)
\]
is given by
\begin{equation} \label{eq:cd}
\begin{aligned}
\widehat{K_n}(x,y) = \frac{1}{\wp(x) - \wp(y)} \bigg[&a_{n-1} \left( \pi_n(x)\pi_{n-2}(y) - \pi_n(y)\pi_{n-2}(x) \right) \\
&+ a_{n-2} \left( \pi_{n-1}(x)\pi_{n-3}(y) - \pi_{n-1}(y)\pi_{n-3}(x) \right) \\
&+ b_{n-1} \left( \pi_{n-1}(x)\pi_{n-2}(y) - \pi_{n-1}(y)\pi_{n-2}(x) \right) \bigg],
\end{aligned}
\end{equation}
where the coefficients $a_{k}, b_{k}$ are obtained through orthogonality condition as defined in \eqref{def:coeffsabc}.
\end{theorem}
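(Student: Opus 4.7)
The plan is to follow the classical Christoffel--Darboux argument, using the five-term recurrence \eqref{eq:5term} of Theorem~\ref{thm:five-term rec} in place of the usual three-term one. Introduce, alongside the symmetric diagonal products $T_j(x,y):=\pi_j(x)\pi_j(y)$, two antisymmetric bilinear quantities
\begin{equation*}
A_j(x,y):=\pi_j(x)\pi_{j-1}(y)-\pi_{j-1}(x)\pi_j(y),\qquad B_j(x,y):=\pi_j(x)\pi_{j-2}(y)-\pi_{j-2}(x)\pi_j(y),
\end{equation*}
which take over the role of the single Wronskian-type object appearing in the classical proof.

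First, apply \eqref{eq:5term} at $z=x$, multiply by $\pi_j(y)$, and subtract the identity obtained by swapping $x$ and $y$. The diagonal coefficient $c_j$ contributes symmetrically and drops out, yielding the local identity
\begin{equation*}
(\wp(x)-\wp(y))\,T_j \;=\; a_{j+1}B_{j+2}+b_{j+1}A_{j+1}-b_jA_j-a_{j-1}B_j.
\end{equation*}
Second, sum this identity over $j=0,\ldots,n-2$. The $b$-terms form a nearest-neighbour difference that telescopes to $b_{n-1}A_{n-1}-b_0 A_0$, while the $a$-terms form a next-nearest-neighbour difference that telescopes to $a_{n-1}B_n+a_{n-2}B_{n-1}-a_{-1}B_1-a_0 B_0$. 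Using the conventions $\pi_k=0$ for $k<0$ together with $\pi_1=0$, each of the lower-boundary quantities $A_0$, $B_0$, $B_1$ vanishes, so dividing by $\wp(x)-\wp(y)$ produces exactly \eqref{eq:cd}.

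The principal bookkeeping issue is the exceptional index $j=1$: the recurrence coefficients $a_2$, $b_1$, $b_2$ defined through \eqref{def:coeffsabc} all vanish because they are integrals against $\pi_1=0$, and similarly $A_1, A_2, B_3$ vanish since each contains a factor of $\pi_1$. Consequently both sides of the displayed local identity are zero at $j=1$, and its inclusion in the telescoping sum is immaterial. The only genuinely non-routine aspect of the argument is tracking the simultaneous nearest- and next-nearest-neighbour telescopings carefully through the small indices, and confirming that the lower-boundary cancellations leave precisely the three antisymmetric bilinear combinations of \eqref{eq:cd} with no residual terms.
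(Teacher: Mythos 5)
Your proposal is correct and follows essentially the same route as the paper: multiply the five-term recurrence by $\pi_j(y)$, antisymmetrize in $x$ and $y$ so the $c_j$ terms cancel, and telescope the sum over $j=0,\dots,n-2$, with your explicit treatment of the exceptional index $j=1$ being if anything more careful than the paper's. The only blemish is a harmless index swap in the lower boundary of the $a$-telescoping (it should read $-a_{-1}B_0-a_0B_1$ rather than $-a_{-1}B_1-a_0B_0$), but both terms vanish under the stated conventions, so the conclusion is unaffected.
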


\begin{proof}
We begin by using the five-term recurrence relation for the action of \( \wp(z) \) on $\pi_n(z)$ in \eqref{5t}:
\begin{equation} \label{eq:five-term}
\wp(z)\pi_j(z) = a_{j+1} \pi_{j+2}(z) + b_{j+1} \pi_{j+1}(z) + c_j \pi_j(z) + b_j \pi_{j-1}(z) + a_{j-1} \pi_{j-2}(z),
\end{equation}
where the coefficients \( a_j, b_j, c_j \in \mathbb{R} \), and \( a_j > 0 \) for all \( j \). Multiplying both sides of \eqref{eq:five-term} by \( \pi_j(y) \) and summing over \( j \) from 0 to \( n-2 \), we obtain
\begin{align*}
\sum_{j=0}^{n-2} \wp(x)\pi_j(x)\pi_j(y) 
&= \sum_{j=0}^{n-2} \left( a_{j+1} \pi_{j+2}(x) + b_{j+1} \pi_{j+1}(x) + c_j \pi_j(x) + b_j \pi_{j-1}(x) + a_{j-1} \pi_{j-2}(x) \right) \pi_j(y).
\end{align*}
The right-hand side of the above expression can be rearranged by collecting terms that are symmetric in \( x \) and \( y \):
\begin{align}
\sum_{j=0}^{n-2} \wp(x)\pi_j(x)\pi_j(y)
&= \sum_{j=0}^{n-2} c_j \pi_j(x)\pi_j(y) 
+ \sum_{j=1}^{n-2} b_{j-1} \left( \pi_{j-1}(x)\pi_j(y) + \pi_j(x)\pi_{j-1}(y) \right) \\
&\quad + \sum_{j=2}^{n-2} a_{j-1} \left( \pi_{j-2}(x)\pi_j(y) + \pi_j(x)\pi_{j-2}(y) \right) \\
&\quad + a_{n-1} \pi_n(x)\pi_{n-2}(y) + a_{n-2} \pi_{n-1}(x)\pi_{n-3}(y) + b_{n-1} \pi_{n-1}(x)\pi_{n-2}(y).\label{eq:expCDin}
\end{align}
Similarly, one obtains a similar expression for  \( x \leftrightarrow y \)
\begin{align}
\sum_{j=0}^{n-2} \wp(y)\pi_j(y)\pi_j(x)
&= \sum_{j=0}^{n-2} c_j \pi_j(y)\pi_j(x) 
+ \sum_{j=1}^{n-2} b_{j-1} \left( \pi_{j-1}(y)\pi_j(x) + \pi_j(y)\pi_{j-1}(x) \right) \\
&\quad + \sum_{j=2}^{n-2} a_{j-1} \left( \pi_{j-2}(y)\pi_j(x) + \pi_j(y)\pi_{j-2}(x) \right) \\
&\quad + a_{n-1} \pi_n(y)\pi_{n-2}(x) + a_{n-2} \pi_{n-1}(y)\pi_{n-3}(x) + b_{n-1} \pi_{n-1}(y)\pi_{n-2}(x).\label{eq:expCDinsymm}
\end{align}
Subtracting the expressions \eqref{eq:expCDin} and \eqref{eq:expCDinsymm} we get
\begin{align*}
(\wp(x)-\wp(y))\sum_{j=0}^{n-2} \pi_j(y)\pi_j(x) &=a_{n-1} \left( \pi_n(x)\pi_{n-2}(y) - \pi_n(y)\pi_{n-2}(x) \right)\\
&+ a_{n-2} \left( \pi_{n-1}(x)\pi_{n-3}(y) - \pi_{n-1}(y)\pi_{n-3}(x) \right)\\
&+ b_{n-1} \left( \pi_{n-1}(x)\pi_{n-2}(y) - \pi_{n-1}(y)\pi_{n-2}(x) \right).
\end{align*}
Dividing both sides by \( \wp(x) - \wp(y) \) gives the result in \eqref{eq:cd}.
\end{proof}

\begin{corollary}[Confluent Christoffel-Darboux Formula]
Taking the limit \( y \to x \) in \eqref{eq:cd}, we obtain
\begin{equation}
\begin{aligned}
\sum_{j=0}^{n-2} \pi_j(x)^2 = \frac{1}{\wp'(x)} \bigg[ &a_{n-1} \left( \pi_n(x)\pi_{n-2}'(x) - \pi_n'(x)\pi_{n-2}(x) \right) \\
&+ a_{n-2} \left( \pi_{n-1}(x)\pi_{n-3}'(x) - \pi_{n-1}'(x)\pi_{n-3}(x) \right) \\
&+ b_{n-1} \left( \pi_{n-1}'(x)\pi_{n-2}(x) - \pi_{n-1}(x)\pi_{n-2}'(x) \right) \bigg].
\end{aligned}
\end{equation}
\end{corollary}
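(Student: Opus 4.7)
The strategy is to take the confluent limit $y\to x$ of the Christoffel--Darboux formula \eqref{eq:cd}; no new ingredient beyond Theorem~\ref{thm:CD} is required. The left-hand side $\widehat{K_n}(x,y)=\sum_{j=0}^{n-2}\pi_j(x)\pi_j(y)$ is manifestly continuous in $y$, with limit $\sum_{j=0}^{n-2}\pi_j(x)^2$. The right-hand side is an indeterminate form of type $0/0$: the denominator $\wp(x)-\wp(y)$ vanishes, and each bracketed difference in the numerator is antisymmetric under $x\leftrightarrow y$ and hence also vanishes at $y=x$. I would resolve this by L'H\^opital's rule in the variable $y$, equivalently by a first-order Taylor expansion.

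Explicitly, expand $\wp(x)-\wp(y) = -\wp'(x)(y-x) + O\bigl((y-x)^2\bigr)$, and for each pair of indices $(a,b)$ appearing in a bracketed term,
\begin{equation*}
\pi_a(x)\pi_b(y) - \pi_a(y)\pi_b(x) = \bigl(\pi_a(x)\pi_b'(x) - \pi_a'(x)\pi_b(x)\bigr)(y-x) + O\bigl((y-x)^2\bigr).
\end{equation*}
Dividing, the common factors $(y-x)$ cancel and the limit exists, yielding a sum of Wronskian-type expressions with overall prefactor $-1/\wp'(x)$; equivalently, absorbing the sign into the Wronskians gives the prefactor $1/\wp'(x)$ as displayed in the corollary.

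Applied term by term to the three Wronskians in \eqref{eq:cd} with coefficients $a_{n-1}$, $a_{n-2}$, $b_{n-1}$ and index pairs $(n,n-2)$, $(n-1,n-3)$, $(n-1,n-2)$, this reproduces the stated confluent formula exactly. I anticipate no genuine obstacle: the whole argument is a single application of L'H\^opital's rule, and the only point requiring care is consistent sign bookkeeping across the three terms, in particular noting how the minus sign produced by $\wp'(x)$ is absorbed (or not) into each Wronskian according to the ordering chosen in the statement.
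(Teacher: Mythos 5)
Your method --- a first-order Taylor expansion (equivalently L'H\^opital's rule in $y$) applied to the $0/0$ form --- is exactly the intended derivation; the paper states the corollary without proof because this single limit computation is the whole argument, and your expansions of the numerator and of $\wp(x)-\wp(y)=-\wp'(x)(y-x)+O\bigl((y-x)^2\bigr)$ are both correct.

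The one place you go astray is the concluding claim that this ``reproduces the stated confluent formula exactly,'' together with the remark that the minus sign produced by $\wp'(x)$ is ``absorbed (or not) into each Wronskian according to the ordering chosen in the statement.'' That minus sign is a single global factor coming from $\partial_y\bigl(\wp(x)-\wp(y)\bigr)\big|_{y=x}=-\wp'(x)$ and must be treated identically in all three terms; it cannot be absorbed into some Wronskians and not others. Carrying the limit out uniformly gives (all functions evaluated at $x$)
\begin{equation*}
\sum_{j=0}^{n-2}\pi_j^2=\frac{1}{\wp'}\Bigl[a_{n-1}\bigl(\pi_n'\pi_{n-2}-\pi_n\pi_{n-2}'\bigr)+a_{n-2}\bigl(\pi_{n-1}'\pi_{n-3}-\pi_{n-1}\pi_{n-3}'\bigr)+b_{n-1}\bigl(\pi_{n-1}'\pi_{n-2}-\pi_{n-1}\pi_{n-2}'\bigr)\Bigr],
\end{equation*}
i.e.\ all three Wronskians carry the same orientation $\pi_a'\pi_b-\pi_a\pi_b'$, consistent with the classical confluent formula $\sum p_j^2=a_n(p_n'p_{n-1}-p_np_{n-1}')$. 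The corollary as printed has the first two terms in the opposite orientation to the third, so no consistent sign bookkeeping can produce it verbatim; either the first two terms or the third in the printed statement contain a sign typo. You should state the uniform result and flag the discrepancy rather than assert exact agreement with the displayed formula.
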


\begin{corollary}[Degenerate Points]
If \( x \in \{\tau/2, (1+\tau)/2\} \), where \( \wp'(x) = 0 \), then the confluent formula takes the form
\begin{equation}
\begin{aligned}
\sum_{j=0}^{n-2} \pi_j(x)^2 = \frac{1}{\wp''(x)} \bigg[ &a_{n-1} \left( \pi_n(x)\pi_{n-2}''(x) - \pi_n''(x)\pi_{n-2}(x) \right) \\
&+ a_{n-2} \left( \pi_{n-1}(x)\pi_{n-3}''(x) - \pi_{n-1}''(x)\pi_{n-3}(x) \right) \\
&+ b_{n-1} \left( \pi_{n-1}''(x)\pi_{n-2}(x) - \pi_{n-1}(x)\pi_{n-2}''(x) \right) \bigg].
\end{aligned}
\end{equation}
\end{corollary}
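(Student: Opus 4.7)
The strategy is to obtain the degenerate formula by taking the limit $x \to x_0$ in the confluent Christoffel--Darboux identity from the preceding Corollary. At a half-period $x_0\in\{\tau/2,(1+\tau)/2\}$ the derivative $\wp'$ has a simple zero, so the confluent identity becomes an indeterminate form $0/0$, and the value at $x_0$ is recovered by a single application of L'H\^opital's rule.

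Denote by $N(x)$ the bracketed expression appearing in the confluent formula, so that
\[
\sum_{j=0}^{n-2}\pi_j(x)^2 \;=\; \frac{N(x)}{\wp'(x)} \qquad (x\neq x_0).
\]
The left-hand side is real-analytic on $\mathbb{T}\setminus\{0\}$ (each $\pi_j$ is meromorphic with its only pole at $0$), hence continuous and finite at $x_0$. Since $\wp'$ vanishes to first order at a half-period, continuity of the left-hand side forces $N(x_0)=0$, so the ratio is a genuine $0/0$ indeterminate form and L'H\^opital's rule yields
\[
\sum_{j=0}^{n-2}\pi_j(x_0)^2 \;=\; \frac{N'(x_0)}{\wp''(x_0)}.
\]

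It remains to differentiate $N$ termwise. Each of the three brackets in $N$ has the Wronskian-type shape $f(x)g'(x)-f'(x)g(x)$, whose derivative simplifies to
\[
\frac{d}{dx}\bigl(fg'-f'g\bigr) \;=\; fg''-f''g,
\]
since the cross terms $\pm f'g'$ cancel. Applying this identity in each of the three brackets (with the coefficients $a_{n-1}$, $a_{n-2}$, $b_{n-1}$ unchanged) produces exactly the right-hand side displayed in the statement, index for index and sign for sign.

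The only point requiring verification is that $\wp''(x_0)\neq 0$, so that the L'H\^opital step is legitimate. Differentiating the Weierstrass ODE $(\wp')^2=4\wp^3-g_2\wp-g_3$ gives $\wp''(x)=6\wp(x)^2-g_2/2$; evaluating at a half-period yields $6 e_i^2-g_2/2$, where $e_i=\wp(x_0)$ is a root of $4t^3-g_2 t-g_3$. Non-vanishing of this quantity is equivalent to $e_i$ being a simple root, which is automatic on any non-degenerate torus (i.e.\ whenever the discriminant $g_2^3-27g_3^2$ is nonzero), and is part of our standing hypothesis. Apart from this routine check, no serious obstacle arises; the proof is a direct one-step application of L'H\^opital to the confluent identity.
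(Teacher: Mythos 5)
Your proof is correct and follows the route the paper intends: the corollary is obtained from the confluent formula by a single L'H\^opital (equivalently, Taylor) step at the half-periods, using that the Wronskian-type brackets $fg'-f'g$ differentiate to $fg''-f''g$ and that $\wp''$ does not vanish at a half-period on a non-degenerate torus. The observation that continuity of $\sum_j \pi_j(x)^2$ forces the bracket to vanish at $x_0$, making the $0/0$ step legitimate, is exactly the point the paper leaves implicit.
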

We now express the CD formula in terms of the solution of the RHP \ref{RHPY} in the following proposition. For convenience, we present this result in terms of the EOPs $P_n$.
\begin{proposition}\label{prop:CD-RHP}
    Recall the Riemann-Hilbert problem \ref{RHPY}, then the Christoffel-Darboux kernel can expressed in the following form,
\begin{equation}\label{RHPCD}
\begin{aligned}
&\widehat{K_n}(x, y) \\
&= \, \frac{a_{n-1} h_{n-2}}{2\pi i (\wp(x) - \wp(y))} \begin{pmatrix} 0 & 1 \end{pmatrix} \left( \det Y_{n-1}(y) Y_{n-1}^{-1}(y) Y_n(x) - \det Y_{n-1}(x) Y_{n-1}^{-1}(x) Y_n(y) \right) \begin{pmatrix} 1 \\ 0 \end{pmatrix} \\
& + \frac{a_{n-2} h_{n-3}}{2\pi i (\wp(x) - \wp(y))} \begin{pmatrix} 0 & 1 \end{pmatrix} \left( \det Y_{n-2}(y) Y_{n-2}^{-1}(y) Y_{n-1}(x) - \det Y_{n-2}(x) Y_{n-2}^{-1}(x) Y_{n-1}(y) \right) \begin{pmatrix} 1 \\ 0 \end{pmatrix} \\
& + \frac{b_{n-1} h_{n-1}}{2\pi i (\wp(x) - \wp(y))} \begin{pmatrix} 0 & 1 \end{pmatrix} \left( \det Y_{n-1}(y) Y_{n-1}^{-1}(y) Y_{n-1}(x) \right) \begin{pmatrix} 1 \\ 0 \end{pmatrix}.
\end{aligned}
\end{equation}
\end{proposition}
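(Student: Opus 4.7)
The plan is a direct verification: starting from the right-hand side of \eqref{RHPCD}, I expand each of the three matrix expressions using the explicit form \eqref{RHPsol} of $Y_n$, and show that the sum equals the Christoffel--Darboux formula \eqref{eq:cd} already proved in Theorem \ref{thm:CD}. The enabling identity is $\det(A)\,A^{-1}=\mathrm{adj}(A)$ for any invertible $2\times 2$ matrix, so the compound $\det Y_m(z)\,Y_m^{-1}(z)$ appearing everywhere in \eqref{RHPCD} is simply $\mathrm{adj}(Y_m(z))$, polynomial in the entries. Crucially, the bottom row of $\mathrm{adj}(Y_m)$ is $(-(Y_m)_{21},(Y_m)_{11})$ and the first column of any $Y_\ell$ is $((Y_\ell)_{11},(Y_\ell)_{21})^{\mathsf T}$; by \eqref{RHPsol} both of these sit on the ``polynomial side'' of $Y$ and avoid the Cauchy-transform entries. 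Consequently, the sandwich $\begin{pmatrix}0 & 1\end{pmatrix}(\cdot)\begin{pmatrix}1\\0\end{pmatrix}$, which selects the $(2,1)$-entry of a product, reduces to a pure polynomial bilinear form in the $P_j$.

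Concretely, I compute
\[\begin{pmatrix}0 & 1\end{pmatrix}\mathrm{adj}(Y_{n-1}(y))\,Y_n(x)\begin{pmatrix}1\\0\end{pmatrix}=\frac{2\pi i}{h_{n-1}}P_{n-1}(x)P_{n-1}(y)-\frac{2\pi i}{h_{n-2}}P_{n-2}(y)P_n(x).\]
The first term on the right is symmetric in $x,y$, so it cancels upon subtracting the $x\leftrightarrow y$ copy, leaving $\frac{2\pi i}{h_{n-2}}\bigl[P_n(y)P_{n-2}(x)-P_n(x)P_{n-2}(y)\bigr]$. After multiplying by the prefactor $\frac{a_{n-1}h_{n-2}}{2\pi i(\wp(x)-\wp(y))}$, passing to $\pi_j=P_j/\sqrt{h_j}$, and using $a_{n-1}=\sqrt{h_n/h_{n-2}}$ from Theorem \ref{thm:five-term rec}, this matches the $a_{n-1}$-bracket of \eqref{eq:cd}. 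An index shift $n\mapsto n-1$ handles the $a_{n-2}$-bracket in exactly the same way.

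For the $b_{n-1}$-bracket I compute the $(2,1)$-entry of $\mathrm{adj}(Y_{n-1}(y))\,Y_{n-1}(x)$, obtaining $\frac{2\pi i}{h_{n-2}}\bigl[P_{n-2}(x)P_{n-1}(y)-P_{n-2}(y)P_{n-1}(x)\bigr]$. This expression is \emph{already} antisymmetric in $x\leftrightarrow y$, since $\mathrm{adj}(A)A=\det(A)\,I$ forces its off-diagonal entry to vanish at $x=y$; this is precisely why only a single, unsubtracted copy of this product appears in \eqref{RHPCD}. Multiplying by $\frac{b_{n-1}h_{n-1}}{2\pi i(\wp(x)-\wp(y))}$ and reverting to $\pi_j$ closes the match with the $b_{n-1}$-bracket of \eqref{eq:cd}.

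The main obstacle is not conceptual but bookkeeping: one has to track the normalisation constants $h_j,\,a_j,\,b_j$, the factors of $2\pi i$, and the various signs, and confirm that each of the three brackets of \eqref{eq:cd} is reproduced with the exact coefficient given in \eqref{RHPCD}. Once one observes that the Cauchy-transform entries of $Y_n$ never touch the $(2,1)$-slot of the relevant products (by the row/column structure above), the entire argument collapses to routine algebra.
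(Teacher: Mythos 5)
Your strategy --- replace $\det Y_m\,Y_m^{-1}$ by $\mathrm{adj}(Y_m)$ (which also disposes of the apparent singularity at the two zeros of $\det Y_m$, since the adjugate is built polynomially from the entries), observe that the $(2,1)$-slot of the relevant products only ever touches the polynomial column of \eqref{RHPsol}, and compare term by term with \eqref{eq:cd} --- is exactly the route the paper takes; its proof literally reads ``a direct computation using the solution of the RHP shows\dots''. Your intermediate entry computations are correct, and the remark that $\mathrm{adj}(A)A=\det(A)I$ forces the $b_{n-1}$ product to be antisymmetric without subtraction is a nice touch the paper omits.

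However, the final bookkeeping that you dismiss as ``routine algebra'' does not close with the prefactors as printed in \eqref{RHPCD}, and this is the only nontrivial point of the whole argument. Take your first bracket: the antisymmetrized $(2,1)$-entry is $\tfrac{2\pi i}{h_{n-2}}\bigl[P_{n-2}(x)P_n(y)-P_{n-2}(y)P_n(x)\bigr]$. Writing $P_j=\sqrt{h_j}\,\pi_j$ and multiplying by $\tfrac{a_{n-1}h_{n-2}}{2\pi i(\wp(x)-\wp(y))}$ yields
\[
-\frac{a_{n-1}\sqrt{h_n h_{n-2}}}{\wp(x)-\wp(y)}\bigl[\pi_n(x)\pi_{n-2}(y)-\pi_n(y)\pi_{n-2}(x)\bigr]
=-\frac{h_n}{\wp(x)-\wp(y)}\bigl[\pi_n(x)\pi_{n-2}(y)-\pi_n(y)\pi_{n-2}(x)\bigr],
\]
where the last step uses $a_{n-1}=\sqrt{h_n/h_{n-2}}$; but \eqref{eq:cd} requires the coefficient $+a_{n-1}$, not $-h_n$. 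The same mismatch occurs in the other two brackets (with $-h_{n-1}$ appearing in place of $a_{n-2}$, and an extraneous factor in the $b_{n-1}$ term), so it cannot be absorbed into a single overall constant. You must either exhibit a compensating normalization that you have omitted, or conclude that the prefactors in \eqref{RHPCD} (and in the paper's identity \eqref{t1}) need correcting --- for instance, the first prefactor should be $\tfrac{1}{2\pi i(\wp(y)-\wp(x))}$ rather than $\tfrac{a_{n-1}h_{n-2}}{2\pi i(\wp(x)-\wp(y))}$ for the identity to hold with the standard adjugate. Asserting that the coefficients ``match'' without displaying this cancellation leaves the proof incomplete precisely where it needs to be checked.
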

    \begin{proof}
It was shown in \cite[Lemma~2.3]{DLR24} that
\[
\det Y_n(z) = \wp(z) + c_n,
\]
where $\wp(z)$ is the Weierstrass elliptic function and $c_n$ is a constant depending on $n$. Since $\wp(z) + c_n$ is a doubly periodic meromorphic function with a double pole and two simple zeros in the fundamental region, it follows that $\det Y_n(z)$ has exactly two simple zeros, which we denote by $x_1(n)$ and $x_2(n)$.

Away from the points $0$, $x_1(n)$, and $x_2(n)$, the matrix $\det Y_n(z)$ is nonzero and thus $Y_n(z)$ is invertible. At $z = x_k(n)$ for $k = 1, 2$, we define the expression
\[
\det Y_{n-1}(z) \cdot Y_{n-1}^{-1}(z) := \lim_{\zeta \to z} \det Y_{n-1}(\zeta) \cdot Y_{n-1}^{-1}(\zeta),
\]
which we interpret as the analytic continuation of the matrix product from a punctured neighborhood of $z$. In particular, we define
\[
\det Y_{n-1}(z) \cdot Y_{n-1}^{-1}(z) := 
\begin{pmatrix}
\frac{2\pi i}{h_{n-1}} \mathcal{C}(P_{n-1})(z) & -\frac{2\pi i}{h_{n-1}} P_{n-1}(z) \\
-\mathcal{C}(P_n)(z) & P_n(z)
\end{pmatrix},
\quad z \in \{x_1(n), x_2(n)\}.
\]

A direct computation using the solution of the RHP \eqref{RHPsol} shows that the Christoffel–Darboux expression
\begin{align}
&\frac{a_{n-1} \left( \pi_n(x) \pi_{n-2}(y) - \pi_n(y) \pi_{n-2}(x) \right)}{\wp(x) - \wp(y)}\\
&=
\frac{a_{n-1} h_{n-2}}{2\pi i (\wp(x) - \wp(y))} 
\begin{pmatrix} 0 & 1 \end{pmatrix}
\left( 
\det Y_{n-1}(y) \cdot Y_{n-1}^{-1}(y) Y_n(x) 
- 
\det Y_{n-1}(x) \cdot Y_{n-1}^{-1}(x) Y_n(y) 
\right)
\begin{pmatrix} 1 \\ 0 \end{pmatrix},\label{t1}
\end{align}
and
 \begin{align}
&\frac{a_{n-2} \left( \pi_{n-1}(x) \pi_{n-3}(y) - \pi_{n-1}(y) \pi_{n-3}(x) \right)}{\wp(x) - \wp(y)}\\
&=\frac{a_{n-2} h_{n-3}}{2\pi i (\wp(x) - \wp(y))} \begin{pmatrix} 0 & 1 \end{pmatrix} \left( \det Y_{n-2}(y) Y_{n-2}^{-1}(y) Y_{n-1}(x) - \det Y_{n-2}(x) Y_{n-2}^{-1}(x) Y_{n-1}(y) \right) \begin{pmatrix} 1 \\ 0 \end{pmatrix}.
 \end{align}
Similarly, we obtain
\begin{align}
&\frac{b_{n-1} \left( \pi_{n-1}(x)\pi_{n-2}(y) - \pi_{n-1}(y)\pi_{n-2}(x) \right)}{\wp(x)-\wp(y))}\\
&=\frac{b_{n-1} h_{n-1}}{2\pi i (\wp(x) - \wp(y))} \begin{pmatrix} 0 & 1 \end{pmatrix} \left( \det Y_{n-1}(y) Y_{n-1}^{-1}(y) Y_{n-1}(x) \right) \begin{pmatrix} 1 \\ 0 \end{pmatrix}.
\end{align}
Substituting the above expressions in \eqref{eq:cd} gives \eqref{RHPCD}.
\end{proof}
\begin{remark}[Spectral Interpretation of Zeros]
For our EOPs, we have $\pi_1(z)=0$. Now set $a_2 = b_1 = b_2 = 0$, $c_1 = 1$. From the five-term recurrence coefficients, we can then construct the infinite, symmetric, penta-diagonal matrix 
\[
J = \begin{pmatrix}
c_0 & b_1 & a_1 & 0 & 0 & 0 & \cdots \\
b_1 & c_1 & b_2 & a_2 & 0 & 0 & \cdots \\
a_1 & b_2 & c_2 & b_3 & a_3 & 0 & \cdots \\
0 & a_2 & b_3 & c_3 & b_4 & a_4 & \cdots \\
0 & 0 & a_3 & b_4 & c_4 & b_5 & \cdots \\
\vdots & \vdots & \vdots & \vdots & \vdots & \ddots & \ddots
\end{pmatrix}.
\]
Let $J_{n,n+1}$ denote the $n \times (n+1)$ top-left submatrix of $J$. Then, the recurrence relation implies the following linear identity for the column vector of orthonormal polynomials:
\[
J_{n,n+1}
\begin{pmatrix}
\pi_0(x) \\
\pi_1(x) \\
\vdots \\
\pi_n(x)
\end{pmatrix}
=
\wp(x)
\begin{pmatrix}
\pi_0(x) \\
\pi_1(x) \\
\vdots \\
\pi_{n-1}(x)
\end{pmatrix}
-
a_n \pi_{n+1}(x) \, e_n,
\]
where $e_n$ is the $n$-th standard basis vector in $\mathbb{C}^n$. In particular, if $x_0$ is a zero of $\pi_{n+1}(x)$, then the correction term vanishes, and we obtain
\[
J_{n,n+1} v_n = \wp(x_0) \cdot P v_n,
\]
where $v_n = (\pi_0(x_0), \dots, \pi_n(x_0))^T$ and $P$ is the projection onto the first $n$ coordinates. Thus, the zeros of $\pi_{n+1}$ correspond to the compressed eigenvalue problem for $J_{n,n+1}$ under the projection $P$.
\end{remark} 

With the CD kernel constructed above, we define the new correlation kernel 
\begin{equation}\label{DPPKer:def}
    K_n(x,y):=\sqrt{w(x)}\sqrt{w(y)}\widehat{K}_{n+1}(x,y)
\end{equation}

\begin{proposition} The kernel $K(x,y)$ has the following properties.
    \begin{enumerate}
        \item Positivity: For $x_{i}\in \gamma$ we have
    \begin{equation}
        \det [K(x_i,x_j)]_{i,j=1}^{n}= \left(\det \left[\sqrt{w(x_j)}\pi_{k-1}(x_j)\right]_{j,k=1}^{n}\right)^2\geq 0 
    \end{equation}
    \item Trace-class: $$\mathrm{Tr} K = \int_{\gamma}K_{n}(x,x)=\sum_{i=0}^{n-1}\int_{\gamma}w(x)\pi_{k}(x)\pi_{k}(x)dx=n$$
    \item Reproducing property:$$\int_{\gamma}K_{n}(x,s)K_n(s,y)ds=K_n(x,y).$$
 \end{enumerate}
\end{proposition}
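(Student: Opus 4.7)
The plan is to replace the closed-form five-term version of $\widehat K_{n+1}$ by its original bilinear expansion, which makes all three properties standard facts about reproducing kernels built from an $L^2$-orthonormal family. Setting $\phi_j(x):=\sqrt{w(x)}\,\pi_j(x)$, one rewrites
\[
K_n(x,y)=\sqrt{w(x)w(y)}\sum_{j=0}^{n-1}\pi_j(x)\pi_j(y)=\sum_{j=0}^{n-1}\phi_j(x)\phi_j(y),
\]
and verifies each claim from this form. Orthonormality of the $\pi_j$ with respect to $w\,dz$ on $\gamma$ translates to $\int_\gamma \phi_j(s)\phi_k(s)\,ds=\delta_{jk}$ (with the convention $\phi_1\equiv 0$), and this is essentially the only ingredient needed.

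For the positivity statement I would apply the Cauchy--Binet (Andr\'eief) identity directly: the matrix $\bigl[K_n(x_i,x_j)\bigr]_{i,j=1}^n$ factors as $MM^{\mathsf T}$ with entries $M_{ik}=\phi_{k-1}(x_i)=\sqrt{w(x_i)}\,\pi_{k-1}(x_i)$, so
\[
\det\bigl[K_n(x_i,x_j)\bigr]_{i,j=1}^n=\det(MM^{\mathsf T})=(\det M)^2\geq 0,
\]
which is exactly the squared-determinant identity stated. The trace formula is even more immediate: integrating $K_n(x,x)$ term-by-term and invoking orthonormality yields one unit contribution per nonvanishing $\pi_j$ in the range $0\leq j\leq n-1$.

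The reproducing property is obtained by expanding the product $\widehat K_{n+1}(x,s)\widehat K_{n+1}(s,y)$ as a double sum in $j,k$, pulling out the factors depending only on $x$ and $y$, and invoking $\int_\gamma w(s)\pi_j(s)\pi_k(s)\,ds=\delta_{jk}$ to collapse the double sum to its diagonal. After reinstating the factor $\sqrt{w(x)w(y)}$, this is precisely $K_n(x,y)$.

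The only delicate point, which I would flag but not treat as a genuine obstacle, is the bookkeeping forced by the convention $\pi_1(z)\equiv 0$: the index $j=1$ contributes nothing to any of the sums above, so a strict reading would yield $n-1$ rather than $n$ in the trace. One resolves this by letting the running index range over $\{0,2,3,\dots,n-1\}$ throughout (matching the EOP index set of Definition~\ref{def:orthonormalEOPs}) or by a uniform index shift in the definition \eqref{DPPKer:def}. This cosmetic adjustment does not affect positivity or the reproducing identity, both of which are manifestly insensitive to some $\phi_j$ being identically zero.
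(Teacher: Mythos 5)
Your proposal is correct and is essentially the paper's own proof: the paper does not spell out the argument at all, deferring to ``the classical arguments in \cite{forrester2010log},'' and those classical arguments are precisely the bilinear expansion, Cauchy--Binet factorization, term-by-term integration, and orthonormality collapse that you carry out. Your remark about the convention $\pi_1\equiv 0$ is a legitimate catch that the paper glosses over --- as literally stated, the trace identity gives $n-1$ rather than $n$ unless the summation index is understood to run over the EOP index set $\{0,2,3,\dots\}$ --- and your proposed reindexing is the right fix.
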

The proof follows through the classical arguments in \cite{forrester2010log}.


The advantage of introducing the kernel \eqref{DPPKer:def} is that it defines a point process. Indeed, it is known that the determinant of a kernel with the properties above defines a symmetric probability density function (p.d.f)
\begin{align}\label{eq:pdf}
    P(x_1,\dots, x_n) = \frac{1}{n!} \det\left[K(x_i, x_j) \right]_{i,j=1}^{n}.
\end{align}
In our case, this is associated to a determinantal point process on the interval $\gamma^n$, with $K(x_i,x_j)$ assuming the role of the correlation kernel. 

The above fact merits a separate study. As mentioned in Section \ref{subsec:outloook}, we expect interesting phenomena with $n$-dependent weights, and we reserve a detailed study of point processes associated to EOPs with general weights for a future work.

\section{Case of the symmetric weight}\label{sec:symm-weight}
With the general setup of the previous sections, we now turn to a special case of the weight function, namely when $w(x)$ is symmetric 
\begin{equation}\label{eq:even_weight}
    {w}\left(\tfrac{1}{2}(1+\tau) + z\right) = {w}\left(\tfrac{1}{2}(1+\tau) - z\right), \quad z \in [0,1].
\end{equation}


In this case, the system of EOPs decompose naturally into families of even and odd degree polynomials \cite{DLR24}.
More precisely, the monic orthogonal polynomials \( \pi_n(x, \tau) \) admit the following explicit forms:
\begin{itemize}
    \item For even degrees \( n = 2k \), one has
    \begin{equation}
       \pi_{2k}(z,\tau) = \sum_{i=0}^k a_{i,2k}(\tau)\, \wp(z)^i, \qquad \text{with} \quad a_{k,2k}(\tau) = 1.
    \end{equation}

    \item For odd degrees \( n = 2k+3 \), one has
    \begin{equation}
      \pi_{2k+3}(z,\tau) = -\frac{1}{2} \wp'(z) \sum_{i=0}^k a_{i,2k+3}(\tau)\, \wp(z)^i, \qquad \text{with} \quad a_{k,2k+3}(\tau) = 1.
    \end{equation}
\end{itemize}
This splitting reflects the underlying parity symmetry induced by \eqref{eq:even_weight}. We note that each of the even and odd families recover the classical OPs on the complex plane with the transformation $\wp(z) \to z$. However, studying each of these odd and even families offers insights into the dynamics of the EOPs on non-trivial topologies. In what follows, we present three results for the even polynomials namely, the interlacing property of the zeros, Heine like formula, and the CD formula. Similar expressions hold for the odd polynomials and can be obtained by repeating the arguments presented here.

\subsection{Jacobi Matrix and interlacing of the zeros}\label{subsec: Jacobi-even}
Let us begin with the three-term recurrence relation satisfied by the even-degree EOPs.
\begin{proposition}\label{prop:3t}
The orthonormal polynomials solve the following three-term recurrence relation
    \begin{equation}\label{3term2}
     \wp(z)\pi_{n}(z)=a_{n+1}\pi_{n+2}(z)+c_{n}\pi_{n}(z)+a_{n-1}\pi_{n-2}(z),
     \end{equation}
     Where the sequence $\{a_n>0\}$ and $\{c_n\}$ are defined in as of Theorem \ref{thm:five-term rec}.
\end{proposition}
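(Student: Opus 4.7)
The plan is to deduce the three-term recurrence \eqref{3term2} directly from the five-term recurrence of Theorem \ref{thm:five-term rec} by showing that the two off-diagonal coefficients $b_{n+1}$ and $b_n$ both vanish under the symmetric weight assumption \eqref{eq:even_weight}. By \eqref{eq:coeffs}--\eqref{def:coeffsabc} one has $b_{k+1}=\int_\gamma \wp(z)\pi_{k+1}(z)\pi_k(z)w(z)\,dz$, so the task reduces to proving this integral vanishes whenever two consecutive indices appear.

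The tool will be a parity argument around the midpoint $\tfrac{1+\tau}{2}$ of the cycle $\gamma$. I would parametrize $z=\tfrac{1+\tau}{2}+t$ with $t\in[-\tfrac12,\tfrac12]$ and examine the involution $t\mapsto -t$. The weight $w$ is invariant by \eqref{eq:even_weight}. Using evenness of $\wp$ combined with double periodicity, $\wp\bigl(\tfrac{1+\tau}{2}+t\bigr)=\wp\bigl(-\tfrac{1+\tau}{2}-t\bigr)=\wp\bigl(\tfrac{1+\tau}{2}-t\bigr)$, so $\wp$ is invariant under the reflection; the same computation for $\wp'$ produces a sign change. Combining this with the explicit decomposition displayed just before the proposition, $\pi_{2k}$ is a polynomial in $\wp$ and hence invariant in $t$, whereas $\pi_{2k+3}$ carries a prefactor $\wp'$ and is therefore antisymmetric in $t$.

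Since the factors $\pi_k$ and $\pi_{k+1}$ appearing in $b_{k+1}$ have opposite parities in this sense, the integrand $\wp(z)\pi_k(z)\pi_{k+1}(z)w(z)$ is odd in $t$ and the integral over the symmetric interval vanishes; the same argument applied to $b_n$ yields $b_n=0$. Substituting $b_{n+1}=b_n=0$ into \eqref{eq:5term} collapses it exactly to \eqref{3term2}, while the positivity $a_n>0$ is inherited from Theorem \ref{thm:five-term rec}.

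The main subtlety is not computational but conceptual: one must match the ``even/odd'' terminology used for the expansion of $\pi_n$ in the basis $\mathcal{B}$ with the reflection symmetry $t\mapsto -t$ on $\gamma$ that controls the vanishing of the integrals. Once the identities $\wp(\tfrac{1+\tau}{2}\pm t)=\wp(\tfrac{1+\tau}{2}\mp t)$ and $\wp'(\tfrac{1+\tau}{2}\pm t)=-\wp'(\tfrac{1+\tau}{2}\mp t)$ are written out explicitly, this identification is immediate and the rest of the argument is purely formal.
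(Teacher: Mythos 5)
Your proposal is correct and follows essentially the same route as the paper: the authors' remark after Proposition \ref{prop:3t} states that the result follows from the five-term recurrence by restricting to the even/odd families, and their sketch of the companion Proposition \ref{prop:4t} uses exactly the parity argument you spell out (vanishing of the mixed coefficients $c_{2n+i,2n+3}$ by symmetry of the integrand under reflection about $\tfrac{1+\tau}{2}$). Your write-up simply makes explicit the reflection identities for $\wp$ and $\wp'$ and the resulting vanishing of $b_n$, which is the intended mechanism.
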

\begin{remark}
The above statement can be proved starting from the five-term recurrence relation by restricting to even polynomials or through the linear system they solve, as shown in \cite{DLR24}.
\end{remark}

\begin{proposition}\label{prop:4t}
One might also obtain a four term recurrence relation
    \begin{equation}
\wp'(z)\pi_{n}(z)=p_{n+3}\pi_{n+3}(z)+r_{n+1}\pi_{n+1}(z)+r_{n}\pi_{n-1}(z)+p_{n}\pi_{n-3}(z).
\end{equation}
Where $\{p_n\}$ and $ \{r_n\}$ are defined as in Theorem \ref{eq:7term}.
\end{proposition}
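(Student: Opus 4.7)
The plan is to derive the four-term relation of Proposition~\ref{prop:4t} directly from the seven-term recurrence of Theorem~\ref{thm:sevent} by showing that, under the symmetry assumption \eqref{eq:even_weight}, exactly three of its seven coefficients vanish identically. The mechanism is a parity argument with respect to the reflection about the midpoint $z_{0} = \tfrac{1}{2}(1+\tau)$ of the contour $\gamma = [\tau/2,\, 1+\tau/2]$.

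First I would record the parity structure. Setting $u = z - z_{0}$, the contour $\gamma$ corresponds to the symmetric interval $u \in [-1/2, 1/2]$. Since $2z_{0} = 1 + \tau$ lies in the lattice $\mathbb{Z} + \mathbb{Z}\tau$, the evenness and double-periodicity of $\wp$ give $\wp(z_{0}+u) = \wp(-z_{0}-u) = \wp(z_{0}-u)$, and differentiation yields $\wp'(z_{0}+u) = -\wp'(z_{0}-u)$. Combined with the explicit representations recalled above, $\pi_{2k}(z) = \sum_{i} a_{i,2k}\wp(z)^{i}$ and $\pi_{2k+3}(z) = -\tfrac{1}{2}\wp'(z)\sum_{i} a_{i,2k+3}\wp(z)^{i}$, this shows that $\pi_{n}(z)$ is an even (respectively odd) function of $u$ when $n$ is even (respectively odd), while the weight $w$ is even in $u$ by hypothesis \eqref{eq:even_weight}.

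Next, using the inner-product characterization of the coefficients from Theorem~\ref{thm:sevent}, I would examine the three integrals $q_{n+2} = \int_{\gamma} \wp'(z)\pi_{n}(z)\pi_{n+2}(z) w(z)\, dz$, $s_{n} = \int_{\gamma} \wp'(z)\pi_{n}(z)^{2} w(z)\, dz$, and $q_{n} = \int_{\gamma} \wp'(z)\pi_{n}(z)\pi_{n-2}(z) w(z)\, dz$. In each of these integrands the two EOP factors carry indices of the same parity, so their product is even in $u$; multiplied by the odd factor $\wp'$ and the even weight $w$, the integrand becomes odd in $u$ and the integral over the symmetric interval vanishes. Hence $q_{n+2} = s_{n} = q_{n} = 0$, and substituting into the seven-term recurrence \eqref{eq:7term} leaves exactly the four surviving terms $p_{n+3}\pi_{n+3}(z) + r_{n+1}\pi_{n+1}(z) + r_{n}\pi_{n-1}(z) + p_{n}\pi_{n-3}(z)$, indexed precisely by the positions of opposite parity to $n$.

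The only delicate point is the parity analysis of $\wp$ and $\wp'$ about the half-period $z_{0}$ rather than about the origin, where these functions are classically even and odd. Once the lattice relation $2z_{0} \in \mathbb{Z} + \mathbb{Z}\tau$ is exploited to transfer the standard parity at $0$ to $z_{0}$, the remainder of the argument is a direct bookkeeping check over the seven terms of \eqref{eq:7term}, since the indices $n+3,n+1,n-1,n-3$ automatically have parity opposite to $n$ and therefore yield even integrands that need not vanish.
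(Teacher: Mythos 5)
Your proposal is correct and follows essentially the same route as the paper's sketch: expand $\wp'(z)\pi_n(z)$ in the orthonormal basis via Theorem~\ref{thm:sevent} and use the symmetry of the weight to show that the three coefficients pairing $\pi_n$ with polynomials of the same parity ($q_{n+2}$, $s_n$, $q_n$) vanish. You simply make explicit the parity bookkeeping about the half-period $z_0=\tfrac12(1+\tau)$ that the paper compresses into the phrase ``from symmetry.''
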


\noindent
{\it Sketch of proof:} We similarly consider 

    \begin{equation}
        \wp'(z)\pi_{2n+3}(z)=\sum_{i=0}^{6}c_{2n+i,2n+3}\pi_{2n+i}(z),
    \end{equation}
where $$c_{2n+i,2n+3}=\int_{\gamma}\pi_{2n+i}(z)\pi_{2n+3}(z)\wp'(z)w(z)dz.$$ From symmetry, it follows that $c_{2n+i,2n+3}=0$ for $i=1,3,5$ and  $c_{2n+i,2n+3}=c_{2n+3,2n+i}$. We then recover the statement of the proposition.


With the recurrence relation above, the (semi-infinite) Jacobi matrix \( J \) associated with the orthogonal polynomial system \( \{\pi_{2k}(z)\}_{k=0}^\infty \) is given by
\begin{equation}
    J = \begin{pmatrix}
        \beta_0 & \alpha_1 & 0 & 0 & \dots \\
        \alpha_1 & \beta_1 & \alpha_2 & 0 & \dots \\
        0 & \alpha_2 & \beta_2 & \alpha_3 & \ddots \\
        0 & 0 & \alpha_3 & \beta_3 & \ddots \\
        \vdots & \vdots & \ddots & \ddots & \ddots
    \end{pmatrix},
\end{equation}
where \( \alpha_k > 0 \) for all \( k \geq 1 \), and \( \beta_k \in \mathbb{R} \) are the recurrence coefficients corresponding to the orthonormal polynomials. Denote by \( J_n \) the upper-left \( n \times n \) principal submatrix of \( J \). The recurrence relation can then be expressed in matrix form as
\begin{equation}
    J_n \begin{pmatrix}
        \pi_0(z) \\
        \pi_2(z) \\
        \vdots \\
        \pi_{2n-4}(z) \\
        \pi_{2n-2}(z)
    \end{pmatrix}
    =
    \wp(z) \begin{pmatrix}
        \pi_0(z) \\
        \pi_2(z) \\
        \vdots \\
        \pi_{2n-4}(z) \\
        \pi_{2n-2}(z)
    \end{pmatrix}
    -
    \begin{pmatrix}
        0 \\
        0 \\
        \vdots \\
        0 \\
        \alpha_n \pi_{2n}(z)
    \end{pmatrix}.
\end{equation}

Let \( z_0 \) be a zero of \( \pi_{2n}(z) \), i.e., \( \pi_{2n}(z_0) = 0 \). Then the residual term vanishes at \( z = z_0 \), and we obtain
\begin{equation}
    J_n \mathbf{v}(z_0) = \wp(z_0) \mathbf{v}(z_0),
\end{equation}
where the vector \( \mathbf{v}(z_0) \in \mathbb{R}^n \) is defined by
\[
\mathbf{v}(z_0) = \begin{pmatrix}
    \pi_0(z_0) \\
    \pi_2(z_0) \\
    \vdots \\
    \pi_{2n-4}(z_0) \\
    \pi_{2n-2}(z_0)
\end{pmatrix}.
\]
Hence, \( \wp(z_0) \) is an eigenvalue of the finite Jacobi matrix \( J_n \), and \( \mathbf{v}(z_0) \) is the associated eigenvector.

\begin{lemma}
Let the zeros of \( \pi_{2n}(z) \) be denoted by
\[
\{z_{1,2n}, z_{2,2n}, \dots, z_{2n-1,2n}, z_{2n,2n}\},
\]
ordered (because of symmetry) such that
\[
\wp(z_{2j,2n}) = \wp(z_{2j+1,2n}) \quad \text{for each } j = 1, \dots, n.
\]
Then, the zeros of the polynomials interlace in the following sense:
\begin{equation}
    \wp(z_{2j,2n}) < \wp(z_{2j,2n-2}) < \wp(z_{2j+2,2n}) \qquad \text{for } j = 1, \dots, n-1.
\end{equation}
Since $\wp$ is monotone in $\gamma/2$ this means they are also interlacing on $\gamma$.
\end{lemma}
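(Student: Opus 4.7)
The plan is to reduce the statement to the classical Cauchy interlacing theorem for real symmetric tridiagonal matrices, via the substitution $\lambda = \wp(z)$. Under the symmetry hypothesis \eqref{eq:even_weight}, each even polynomial $\pi_{2k}(z)$ is a polynomial of degree $k$ in $\wp(z)$, so setting $\lambda = \wp(z)$ and $Q_k(\lambda) := \pi_{2k}(z)$ converts the three-term relation \eqref{3term2} into a classical orthonormal recurrence $\lambda Q_k(\lambda) = \alpha_{k+1} Q_{k+1}(\lambda) + \beta_k Q_k(\lambda) + \alpha_k Q_{k-1}(\lambda)$ on the real line, with $\alpha_{k+1} = a_{2k+1} > 0$. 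The corresponding orthogonality measure is the push-forward of $w(z)\,dz$ under $\wp$ restricted to the half of $\gamma$ on which $\wp$ is a strictly monotone bijection onto a real interval $I \subset \mathbb{R}$; positivity of $w$ makes this a positive Borel measure, so the $Q_k$ are genuine classical orthonormal polynomials, and the matrix $J_n$ introduced above is precisely their $n \times n$ Jacobi matrix.

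Next, I would identify the zeros of $\pi_{2n}$ with the spectrum of $J_n$. Because $\pi_{2n}$ is a polynomial in $\wp$, its $2n$ zeros on $\gamma$ group into $n$ pairs with a common $\wp$-value, namely $\{\wp(z_{2j,2n}) : j = 1, \dots, n\}$. The identity $J_n \mathbf{v}(z_0) = \wp(z_0) \mathbf{v}(z_0)$ established in the preceding paragraph, together with the classical bijection between zeros of orthonormal polynomials and eigenvalues of the associated Jacobi matrix, shows that these $n$ distinct $\wp$-values exhaust the spectrum of $J_n$. Simplicity of the eigenvalues is automatic because all off-diagonal entries $\alpha_k$ are strictly positive. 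The analogous identification holds between the zeros of $\pi_{2n-2}$ and the eigenvalues of $J_{n-1}$.

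Finally, $J_{n-1}$ is by construction the upper-left $(n-1)\times(n-1)$ principal submatrix of the real symmetric matrix $J_n$. Cauchy's interlacing theorem then yields
$$
\lambda_j(J_n) \,<\, \lambda_j(J_{n-1}) \,<\, \lambda_{j+1}(J_n), \qquad j = 1, \dots, n-1,
$$
with strict inequalities guaranteed by the positivity of the off-diagonal entries $\alpha_k$ (which prevents $J_n$ and $J_{n-1}$ from sharing an eigenvalue). Substituting $\lambda_j(J_n) = \wp(z_{2j,2n})$ and $\lambda_j(J_{n-1}) = \wp(z_{2j,2n-2})$ gives the claimed chain of inequalities, and the concluding remark about interlacing of the zeros themselves on $\gamma$ follows immediately from the strict monotonicity of $\wp$ on the relevant half of $\gamma$. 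The only real work is conceptual, namely recognising that the even-polynomial sector becomes classical after the $\wp$-substitution; the remaining difficulty is pure bookkeeping, matching the indexing convention $\wp(z_{2j,2n}) = \lambda_j(J_n)$ to the increasing eigenvalue order supplied by Cauchy interlacing.
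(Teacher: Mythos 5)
Your proposal is correct and follows essentially the same route as the paper: identify the $\wp$-values of the zeros of $\pi_{2n}$ with the eigenvalues of the truncated Jacobi matrix $J_n$ (via the eigenvector relation $J_n\mathbf{v}(z_0)=\wp(z_0)\mathbf{v}(z_0)$) and then invoke the Cauchy interlacing theorem for principal submatrices of a real symmetric tridiagonal matrix. Your write-up is in fact more complete than the paper's, since you make explicit the reduction to classical orthonormal polynomials in $\lambda=\wp(z)$, the exhaustion of the spectrum by the $n$ distinct $\wp$-values, and the strictness of the inequalities coming from $\alpha_k>0$.
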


\begin{proof}
The result follows from the spectral theorem applied to symmetric tridiagonal matrices, and the interlacing property of eigenvalues of principal submatrices. Since the eigenvalues of \( J_n \) correspond to the values \( \wp(z_{2j,2n}) \), and similarly those of \( J_{n-1} \) to \( \wp(z_{2j,2n-2}) \), the interlacing follows from the classical Cauchy interlacing theorem or, equivalently, from the Rayleigh quotient characterization of eigenvalues.
\end{proof}

For completeness, we now provide the Shohat-Favard theorem for even EOPs. 
\begin{proposition}
    Let \( \{\pi_{2n}(z)\}_{n \geq 0} \) be a sequence of polynomials satisfying the three-term recurrence relation
\begin{equation}\label{eq:3term_recurrence}
    \wp(z)\, \pi_{2n}(z) = a_{n+1} \pi_{2n+2}(z) + b_n \pi_{2n}(z) + a_n \pi_{2n-2}(z), \qquad n \geq 0,
\end{equation}
with initial conditions \( \pi_0(z) = w_0 \), \( \pi_{-2}(z) = 0 \), and recurrence coefficients \( a_n > 0 \), \( b_n \in \mathbb{R} \) for all \( n \geq 0 \). Then there exists a symmetric probability measure \( \mu \), supported on a compact set \( \gamma \), such that the family \( \{ \pi_{2n} \} \) is orthonormal with respect to \( \mu \):
\begin{equation}
    \int_\gamma \pi_{2n}(z)\, \pi_{2m}(z) \, d\mu(z) = \delta_{n,m}, \qquad n, m \geq 0.
\end{equation}
\end{proposition}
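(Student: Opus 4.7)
The plan is to reduce the claim to the classical Favard theorem on the real line via the substitution $x = \wp(z)$, and then pull the resulting real-line measure back to the A-cycle $\gamma$ in a symmetric fashion.

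First I would show by induction on $n$, using the recurrence \eqref{eq:3term_recurrence}, that every $\pi_{2n}(z)$ is of the form $p_n(\wp(z))$ for a real polynomial $p_n$ of degree $n$. The base case is the initial condition $\pi_0 = w_0$, and the inductive step follows by solving \eqref{eq:3term_recurrence} for $\pi_{2n+2}(z)$, which yields
\begin{equation*}
p_{n+1}(x) = a_{n+1}^{-1}\bigl[(x - b_n)\, p_n(x) - a_n\, p_{n-1}(x)\bigr].
\end{equation*}
Consequently $\{p_n\}$ satisfies the classical three-term recurrence $x\, p_n(x) = a_{n+1}\, p_{n+1}(x) + b_n\, p_n(x) + a_n\, p_{n-1}(x)$ with $a_n > 0$ and $b_n \in \mathbb{R}$.

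Second, I would invoke the classical Favard/Shohat--Favard theorem for $\{p_n\}$: it yields a positive Borel measure $\nu$ on $\mathbb{R}$ with respect to which the $p_n$ are orthonormal. To localise the support of $\nu$ inside the real interval $I := \wp(\gamma)$ (which is what is needed for compact support on $\gamma$), I would identify $\nu$ with the spectral measure of the symmetric Jacobi operator on $\ell^2(\mathbb{N})$ built from $\{a_n, b_n\}$: if this operator is bounded, its spectrum is a compact subset of $\mathbb{R}$, and in our EOP setting (with coefficients inherited from actual elliptic orthogonality) one checks that the spectrum is forced into $I$.

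Third, I would pull $\nu$ back to $\gamma$ symmetrically. Split $\gamma = \gamma^+ \sqcup \gamma^-$ at the midpoint $\tfrac{1}{2}(1+\tau)$, on each half of which $\wp$ is a strictly monotone bijection onto $I$ with inverse branches $\wp^{-1}_{\pm}$. Define
\begin{equation*}
\int_\gamma f(z)\, d\mu(z) := \tfrac{1}{2}\int_I f\bigl(\wp^{-1}_{+}(x)\bigr)\, d\nu(x) + \tfrac{1}{2}\int_I f\bigl(\wp^{-1}_{-}(x)\bigr)\, d\nu(x).
\end{equation*}
Then $\mu$ is manifestly symmetric about $\tfrac{1}{2}(1+\tau)$ and supported on $\gamma$, and for the integrand $f(z) = \pi_{2n}(z)\pi_{2m}(z) = (p_n p_m)(\wp(z))$ the two branches contribute equally, giving $\int_\gamma \pi_{2n}\pi_{2m}\, d\mu = \int_I p_n p_m\, d\nu = \delta_{n,m}$. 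The total mass $\mu(\gamma) = \nu(I) = 1$ fixes the normalisation (after adjusting $w_0$ so that $\pi_0$ is a unit vector).

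The main obstacle I expect is the second step: Favard on its own produces a positive measure on $\mathbb{R}$ without any control on its support and, in the indeterminate Hamburger case, without uniqueness either. Forcing the support into $I$ (and hence ensuring compact support on $\gamma$ as claimed) requires uniform boundedness of $\{a_n, b_n\}$, which must be either adopted as an extra hypothesis or extracted from additional regularity of the abstract recurrence coefficients; this is the delicate analytical point, whereas the algebraic reduction to the real line and the symmetric pull-back are straightforward.
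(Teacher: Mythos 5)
Your proposal is correct in outline but takes a genuinely different route from the paper. You reduce to the classical real-line Favard theorem by first proving $\pi_{2n}(z)=p_n(\wp(z))$, invoking Favard as a black box to get a measure $\nu$ on $\mathbb{R}$, and then pulling $\nu$ back symmetrically through the two monotone branches of $\wp$ on $\gamma$. The paper instead re-proves Favard directly on the torus by the finite-section method: it diagonalizes the $n\times n$ truncation $J_n$ of the Jacobi matrix, reads off Gauss-quadrature-type identities $\sum_k \lambda_{k,n}\,\pi_{2i}(z_{k,n})\pi_{2j}(z_{k,n})=\delta_{i,j}$ from orthonormality of the eigenvectors, forms the discrete probability measures $\mu_n=\sum_k\lambda_{k,n}\delta_{z_{k,n}}$ supported at the zeros of $\pi_{2n}$, extracts a weak limit by Prokhorov, and symmetrizes at the end (a step essentially identical to your pull-back). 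What the paper's route buys is that the approximating measures are finitely supported at explicit points, so tightness and compact support of the limit come for free \emph{once one knows the zeros $z_{k,n}$ lie in $\gamma/2$}; what your route buys is brevity and a clean separation of the elliptic substitution from the classical theorem. The obstacle you flag --- that Favard alone gives no control on $\operatorname{supp}\nu$, so one must bound the Jacobi operator (or otherwise force the spectrum into $\wp(\gamma/2)$) to land in the compact set $I$ and to avoid indeterminacy --- is real, and it is worth noting that the paper's proof faces the very same issue in disguise: the eigenvalues of $J_n$ are real but need not a priori lie in $\wp(\gamma/2)$ for arbitrary $a_n>0$, $b_n\in\mathbb{R}$, so the assertion $\{z_{k,n}\}\subset\gamma/2$ carries the same hidden hypothesis. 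Your honest identification of this as the delicate point is a strength rather than a defect; to fully close either argument one should either assume $\sup_n(|a_n|+|b_n|)<\infty$ with spectrum contained in $\wp(\gamma/2)$, or restrict to coefficients arising from genuine elliptic orthogonality where the zero-location theorem of the appendix applies.
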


\begin{proof}
Let \( J_n \in \mathbb{R}^{n \times n} \) denote the \( n \times n \) truncation of the Jacobi matrix associated with the recurrence \eqref{eq:3term_recurrence}, given by
\begin{equation}
J_n = \begin{pmatrix}
    b_0 & a_1 & 0 & \cdots & 0 \\
    a_1 & b_1 & a_2 & \ddots & \vdots \\
    0 & a_2 & b_2 & \ddots & 0 \\
    \vdots & \ddots & \ddots & \ddots & a_{n-1} \\
    0 & \cdots & 0 & a_{n-1} & b_{n-1}
\end{pmatrix}.
\end{equation}
This matrix is real symmetric and tridiagonal. Hence, by the spectral theorem, it is diagonalizable with a complete orthonormal set of eigenvectors, and all its eigenvalues are real and simple.

Let \( \{z_{k,n}\}_{k=1}^n \subset \gamma/2 \) denote the zeros of \( \pi_{2n}(z) \). It follows from the recurrence relation that \( \wp(z_{k,n}) \) is an eigenvalue of \( J_n \), with associated eigenvector
\[
v_k := \left( \pi_0(z_{k,n}), \pi_2(z_{k,n}), \dots, \pi_{2n-2}(z_{k,n}) \right)^T \in \mathbb{R}^n.
\]
Let \( V \in \mathbb{R}^{n \times n} \) be the matrix whose \( k \)-th column is the normalized vector \( v_k / \|v_k\| \), and let \( D = \mathrm{diag}(\wp(z_{1,n}), \dots, \wp(z_{n,n})) \). Then \( J_n = V D V^T \), and since the eigenvectors are orthonormal, we have
\[
V^T V = I_n, \quad \text{hence} \quad VV^T = I_n.
\]

Set \( \lambda_{k,n} := 1 / \|v_k\|^2 > 0 \). Then we obtain the identity
\begin{equation}
    \sum_{k=1}^n \lambda_{k,n} \, \pi_{2i}(z_{k,n}) \, \pi_{2j}(z_{k,n}) = \delta_{i,j}, \qquad 0 \leq i,j \leq n-1.
\end{equation}
In particular, taking \( i = j = 0 \), we find
\[
\sum_{k=1}^n \lambda_{k,n} = 1.
\]
Define the discrete probability measure
\[
\mu_n := \sum_{k=1}^n \lambda_{k,n} \, \delta_{z_{k,n}},
\]
which is supported on the compact set \( \gamma/2\). Then for all \( 0 \leq i, j \leq n-1 \),
\begin{equation}
    \int_{\gamma/2} \pi_{2i}(z)\, \pi_{2j}(z) \, d\mu_n(z) = \delta_{i,j}.
\end{equation}

The sequence \( \{\mu_n\}_{n=1}^\infty \) consists of probability measures supported on the compact set \( \gamma/2 \), and  is tight. By Prokhorov's theorem, there exists a subsequence \( \mu_{n_k} \) that converges weakly to a probability measure \( \tilde{\mu} \), supported on \( \gamma/2 \).

Since each \( \pi_{2n}(z) \) is a continuous function bounded on \( \gamma/2 \), the orthogonality relations pass to the limit:
\[
\int_{\gamma/2}  \pi_{2n}(z)\, \pi_{2m}(z) \, d\tilde{\mu}(z) = \delta_{n,m}, \qquad n,m \geq 0.
\]

Finally, define a symmetric measure \( \mu \) on \( \gamma \) by
\[
\mu(A) := \tfrac{1}{2} \left( \tilde{\mu}(A \cap \tfrac{1}{2}\gamma) + \tilde{\mu}(-A \cap \tfrac{1}{2}\gamma) \right),
\]
for Borel sets \( A \subset \gamma \). Then \( \mu \) is a symmetric probability measure on \( \gamma \), with respect to which the polynomials \( \pi_{2n}(z) \) are orthonormal, as claimed.
\end{proof}

\subsection{Heine like formula}\label{subsec:Heine-even}
{Heine formula \cite{Van_Assche_2020} for orthogonal polynomials in the real line allows us to view the polynomials as the expected characteristic polynomial of a joint probability distribution function (pdf). We prove analogous results in Proposition \ref{heine-formula} in the elliptic case when the weight function is symmetric.}

If we define 
\begin{equation}
    \nu_{k}=\int_{\gamma}\wp(z)^kw(z)dx, \qquad \widehat{\nu}_{k}=\frac{1}{4}\int_{\gamma}\wp'(z)^2\wp^{k}(z)w(z)dz
\end{equation}
with 

\begin{equation}
    \widehat{\nu}_k=4(\nu_{k+3}-g_2\nu_{k+1}-g_3\nu_{k})
\end{equation}
and the Hankel determinant of moments as 

\begin{equation}
    \Delta_{k}=\begin{vmatrix}
        \nu_0&\nu_1&\dots &\nu_{k-1}\\
        \nu_1&\nu_2&\dots& \nu_{k}\\
        \vdots&\vdots&\ddots&\vdots\\
        \nu_{k-1}&\nu_{k}&\dots & \nu_{2k-2}
    \end{vmatrix}\qquad     \widehat{\Delta}_{k}=\begin{vmatrix}
       \widehat{\nu}_0&\widehat{\nu}_1&\dots &\widehat{\nu}_{k-1}\\
        \widehat{\nu}_1&\widehat{\nu}_2&\dots&\widehat{\nu}_{k}\\
        \vdots&\vdots&\ddots&\vdots\\
        \widehat{\nu}_{k}&\widehat{\nu}_{k+1}&\dots &\widehat{\nu}_{2k-2}
    \end{vmatrix}.
\end{equation}

While $\Delta_0=\widehat{\Delta}_0=1$.

Then $P_{2k}$ is explicitly given as 

\begin{equation}\label{detex}
    P_{2k}(z)=\frac{1}{\Delta_{k}}\begin{vmatrix}
        \nu_0&\nu_1&\dots &\nu_{k}\\
        \nu_1&\nu_2&\dots& \nu_{k+1}\\
        \vdots&\vdots&\ddots&\vdots\\
        \nu_{k-1}&\nu_{k}&\dots & \nu_{2k-1}\\
        1&\wp(z)&\dots &\wp(z)^k
    \end{vmatrix}.
\end{equation}

While $P_{2k+3}$ is explicitly given as 

\begin{equation}\label{detex2}
    P_{2k+3}(z)=\frac{1}{\widehat{\Delta}_{k}}\begin{vmatrix}
        \widehat{\nu}_0&\widehat{\nu}_1&\dots &\widehat{\nu}_{k}\\
        \widehat{\nu}_1&\widehat{\nu}_2&\dots&\widehat{\nu}_{k+1}\\
        \vdots&\vdots&\ddots&\vdots\\
        \widehat{\nu}_{k+1}&\widehat{\nu}_{k+2}&\dots &\widehat{\nu}_{2k-1}\\
        -\frac{1}{2}\wp'(z)& -\frac{1}{2}\wp'(z)\wp(z)&\dots & -\frac{1}{2}\wp'(z)\wp(z)^k
    \end{vmatrix}
\end{equation}
\begin{proposition}\label{heine-formula}
We have
\begin{equation}\label{heine1}
    P_{2k}(z) = \frac{1}{k! \Delta_{k}} \int \dots \int \prod_{j=1}^{k} (\wp(z) - \wp(x_j)) \prod_{i<j} (\wp(x_j) - \wp(x_i))^2\, w(x_1) \dots w(x_k)\, dx_1 \dots dx_k,
\end{equation}
and
\begin{equation}\label{heine2}
\begin{aligned}
    P_{2k+3}(z) = \left( \frac{-1}{2} \right)^{2k+1} \frac{1}{k! \widehat{\Delta}_{k}} \int \dots \int \wp'(z) &\prod_{j=1}^{k} \wp'(x_j)^2 (\wp(z) - \wp(x_j)) \prod_{i<j} (\wp(x_j) - \wp(x_i))^2\,\\&\times  w(x_1) \dots w(x_k)\, dx_1 \dots dx_k.
\end{aligned}
\end{equation}
\end{proposition}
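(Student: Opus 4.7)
The plan is to reduce both formulas to the explicit determinantal expressions \eqref{detex} and \eqref{detex2} via the combination of a Vandermonde product identity with Andreief's integration formula. Throughout, set $V(y_1,\dots,y_m):=\det[\wp(y_j)^{i-1}]_{i,j=1}^{m}=\prod_{1\le i<j\le m}(\wp(y_j)-\wp(y_i))$. The two key tools are the elementary identity
\begin{equation*}
\prod_{j=1}^{k}(\wp(z)-\wp(x_j))\,V(x_1,\dots,x_k) \;=\; V(x_1,\dots,x_k,z),
\end{equation*}
and Andreief's identity
\begin{equation*}
\int_{\gamma^k} \det[\phi_i(x_j)]_{i,j=1}^{k}\,\det[\psi_i(x_j)]_{i,j=1}^{k}\,\prod_j \rho(x_j)\,dx_j \;=\; k!\,\det\!\left[\int_\gamma \phi_i(x)\psi_j(x)\rho(x)\,dx\right]_{i,j=1}^{k},
\end{equation*}
valid for any admissible weight $\rho$ on $\gamma$.

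For \eqref{heine1}, I would substitute the first identity into the right-hand side to obtain
\begin{equation*}
\text{RHS} \;=\; \frac{1}{k!\,\Delta_{k}}\int_{\gamma^k} V(x_1,\dots,x_k,z)\,V(x_1,\dots,x_k)\,\prod_j w(x_j)\,dx_j.
\end{equation*}
Next, I expand the $(k{+}1)\times(k{+}1)$ determinant $V(x_1,\dots,x_k,z)$ by cofactors along its last column (whose entries are the powers $1,\wp(z),\dots,\wp(z)^k$), producing $\sum_{i=1}^{k+1}(-1)^{i+k+1}\wp(z)^{i-1}\,\hat V_i(x_1,\dots,x_k)$, where $\hat V_i$ is the $k\times k$ minor with the row of $\wp^{\,i-1}$ deleted. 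Andreief's identity then evaluates each $\int \hat V_i\,V(x_1,\dots,x_k)\,\prod_j w(x_j)\,dx_j$ as $k!$ times a Hankel-type $k\times k$ determinant whose entries are $\nu_{r+s}$ with the row $r=i-1$ omitted. The resulting linear combination in $\wp(z)^0,\dots,\wp(z)^k$ has coefficients that coincide with the cofactors appearing in the Laplace expansion along the last row of the $(k{+}1)\times(k{+}1)$ Hankel determinant in \eqref{detex}, once one observes that the omitted-row minor is the transpose of the corresponding omitted-column minor. The $k!$ factors cancel and division by $\Delta_k$ reproduces $P_{2k}(z)$.

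For \eqref{heine2}, the plan is to first peel off the parity factor. Recall from Section~\ref{sec:symm-weight} that $P_{2k+3}(z)=-\tfrac{1}{2}\wp'(z)\,R_k(\wp(z))$, where $R_k$ is a monic polynomial of degree $k$ in $\wp(z)$. Orthogonality of $\{P_{2k+3}\}$ with respect to $w(z)\,dz$ on $\gamma$ translates, via $P_{2k+3}\,P_{2j+3}=\tfrac{1}{4}\wp'(z)^2 R_k R_j$, into orthogonality of $\{R_k\}$ in the variable $\wp(z)$ with respect to the modified weight $\tfrac{1}{4}\wp'(z)^2 w(z)$; the corresponding moments and Hankel normalization are precisely $\widehat\nu_k$ and $\widehat\Delta_k$. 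Running the same Vandermonde--Andreief argument as above with this modified weight gives
\begin{equation*}
R_k(\wp(z)) \;=\; \frac{(1/4)^k}{k!\,\widehat\Delta_k}\int \prod_{j=1}^{k}(\wp(z)-\wp(x_j))\,V(x_1,\dots,x_k)^2\,\prod_{j=1}^{k} \wp'(x_j)^2\, w(x_j)\,dx_j,
\end{equation*}
and multiplying by the prefactor $-\tfrac{1}{2}\wp'(z)$ collapses the external scalars into $-\tfrac{1}{2}\cdot(1/4)^k=(-1/2)^{2k+1}$, matching \eqref{heine2} exactly.

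The only real bookkeeping step is matching the signs and omitted-index sets between the cofactor expansion of $V(x_1,\dots,x_k,z)$ and the Laplace expansion of the Hankel determinants in \eqref{detex} and \eqref{detex2}. Concretely, the $i$-th Vandermonde term omits the row of $\wp^{\,i-1}$ while the $j$-th Hankel term omits the column indexed by $\wp(z)^j$, and equality of the two reduces to transpose-invariance of the determinant together with a single index shift. The odd case introduces no further obstacle beyond this; in particular, the identity $\widehat\nu_k = 4(\nu_{k+3}-g_2\nu_{k+1}-g_3\nu_k)$ plays no role in the argument.
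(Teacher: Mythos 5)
Your argument is correct, and it establishes the same equivalence as the paper — namely that the multiple-integral expressions coincide with the moment determinants \eqref{detex} and \eqref{detex2} — but it runs in the opposite direction with a different key lemma. The paper starts from \eqref{detex}, replaces the moment in the $j$-th row by an integral over $x_j$, pulls the integrals out by multilinearity to get a single $(k+1)\times(k+1)$ Vandermonde-type determinant, and then symmetrizes over permutations of $x_1,\dots,x_k$ to convert $\prod_j\wp(x_j)^{j-1}\prod_{i<j}(\wp(x_i)-\wp(x_j))$ into the squared Vandermonde; you instead start from the right-hand side of \eqref{heine1}, absorb $\prod_j(\wp(z)-\wp(x_j))$ into an enlarged Vandermonde, cofactor-expand in the $\wp(z)$ powers, and invoke Andreief's identity to land on the Laplace expansion of \eqref{detex}. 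Since Andreief's identity is itself proved by exactly the permutation-symmetrization the paper carries out by hand, the two routes are essentially dual packagings of one computation; yours outsources the combinatorics to a named lemma at the cost of the row/column bookkeeping you flag, while the paper's is self-contained. Your treatment of the odd case is actually cleaner than the paper's ``obtained analogously'': reducing $P_{2k+3}=-\tfrac12\wp'(z)R_k(\wp(z))$ to orthogonal polynomials in $\wp$ with the modified weight $\tfrac14\wp'(z)^2w(z)$ makes the prefactor $(-1/2)^{2k+1}$ in \eqref{heine2} transparent, and you are right that the relation $\widehat{\nu}_k=4(\nu_{k+3}-g_2\nu_{k+1}-g_3\nu_k)$ is not needed.
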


\begin{proof}
In \eqref{detex}, we can replace the moments by integrals, using \( x_j \) as the variable of integration in the \( j \)th row. Then, by multilinearity of the determinant, we extract \( k \) integrations out of the \( k \) rows to obtain
\begin{equation}\label{int1}
\begin{aligned}
   & P_{2k}(z) = \frac{1}{\Delta_{k}} \int \dots \int 
    \begin{vmatrix}
        1 & \wp(x_1) & \dots & \wp(x_1)^k \\
        \wp(x_2) & \wp(x_2)^2 & \dots & \wp(x_2)^{k+1} \\
        \vdots & \vdots & \ddots & \vdots \\
        \wp(x_k)^{k-1} & \wp(x_k)^k & \dots & \wp(x_k)^{2k-1} \\
        1 & \wp(z) & \dots & \wp(z)^k
    \end{vmatrix}
    w(x_1) \dots w(x_k)\, dx_1 \dots dx_k \\
    &= \frac{1}{\Delta_{k}} \int \dots \int 
    \prod_{j=1}^{k} \wp(x_j)^{j-1} 
    \prod_{i<j} \left( \wp(x_i) - \wp(x_j) \right)
    \prod_{j=1}^{k} \left( \wp(z) - \wp(x_j) \right)\,
    w(x_1) \dots w(x_k)\, dx_1 \dots dx_k.
\end{aligned}
\end{equation}

For a permutation \( \sigma \) of \( \{1,2,\dots,k\} \), we make a change of variables \( x_j \mapsto x_{\sigma(j)} \) in the integral \eqref{int1}. Then the Vandermonde factor \( \prod_{i<j} (\wp(x_i) - \wp(x_j)) \) changes sign if \( \sigma \) is odd, and we obtain
\begin{equation}
\begin{aligned}
    P_{2k}(z) = \frac{1}{\Delta_{k}} \int \dots \int 
    \mathrm{sgn}(\sigma) \prod_{j=1}^{k} \wp(x_{\sigma(j)})^{j-1}
    \prod_{i<j}& (\wp(x_i) - \wp(x_j))
    \prod_{j=1}^{k} (\wp(z) - \wp(x_j))\,
   \\& \times w(x_1) \dots w(x_k)\, dx_1 \dots dx_k.
\end{aligned}
\end{equation}
Averaging over all permutations, we get
\begin{equation}
\begin{aligned}
    P_{2k}(z) = \frac{1}{k! \Delta_{k}} \sum_{\sigma} \int \dots \int 
    \mathrm{sgn}(\sigma) \prod_{j=1}^{k} \wp(x_{\sigma(j)})^{j-1}
    \prod_{i<j}& (\wp(x_i) - \wp(x_j))
    \prod_{j=1}^{k} (\wp(z) - \wp(x_j))\,
    \\&\times w(x_1) \dots w(x_k)\, dx_1 \dots dx_k.
\end{aligned}
\end{equation}
Now observe that
\[
    \sum_{\sigma} \mathrm{sgn}(\sigma) \prod_{j=1}^{k} \wp(x_{\sigma(j)})^{j-1} 
    = \prod_{i<j} (\wp(x_i) - \wp(x_j)).
\]
Substituting this into the previous expression yields \eqref{heine1}. The formula \eqref{heine2} is obtained analogously.
\end{proof}

\subsection{Christoffel-Darboux formula for even EOPs}\label{subsec:CD-even}

We define the even Christoffel-Darboux Kernel as 
\begin{align}
&{K_{n}}(x,y)=\sqrt{w(x)}\sqrt{w(y)}\sum_{i=0}^{n-1}\pi_{2i}(x)\pi_{2i}(y)
\end{align}
The following Proposition and Lemma follow from standard theory such as \cite{forrester2010log}.

    The recurrence relation \eqref{3term2} gives the following expression
\begin{proposition}
    The kernel ${K_n}$ satisfies the Christoffel-Darboux identity
\begin{align}
    &K_{n}(x,y)=\alpha_n \sqrt{w(x)}\sqrt{w (y)}\frac{\pi_{2n}(x)\pi_{2n-2}(y)-\pi_{2n}(y)\pi_{2n-2}(x)}{\wp(x)-\wp(y)}.\\&
\end{align}
\end{proposition}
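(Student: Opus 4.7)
The plan is to carry out the standard Christoffel--Darboux telescoping, with the role of the linear multiplier $z$ in the classical theory replaced by $\wp(z)$ via the three-term recurrence \eqref{3term2} of Proposition \ref{prop:3t}. Writing that recurrence compactly as
\begin{equation*}
\wp(x)\pi_{2i}(x)=\alpha_{i+1}\pi_{2i+2}(x)+\beta_{i}\pi_{2i}(x)+\alpha_{i}\pi_{2i-2}(x),
\end{equation*}
with $\alpha_{i}>0$ identified with the Jacobi matrix entries introduced in Section \ref{subsec: Jacobi-even}, I would multiply both sides by $\pi_{2i}(y)$, then swap $x\leftrightarrow y$, and subtract. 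The symmetric diagonal term $\beta_{i}\pi_{2i}(x)\pi_{2i}(y)$ cancels, and defining
\begin{equation*}
D_{i}(x,y):=\alpha_{i}\bigl[\pi_{2i}(x)\pi_{2i-2}(y)-\pi_{2i}(y)\pi_{2i-2}(x)\bigr],
\end{equation*}
the remaining off-diagonal terms reorganize into the telescoping identity
\begin{equation*}
(\wp(x)-\wp(y))\pi_{2i}(x)\pi_{2i}(y)=D_{i+1}(x,y)-D_{i}(x,y).
\end{equation*}

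Next I would sum this identity over $i=0,1,\ldots,n-1$. The right-hand side collapses to $D_{n}(x,y)-D_{0}(x,y)$; the boundary term $D_{0}(x,y)$ vanishes thanks to the convention $\pi_{-2}(z)=0$ used throughout the paper. The result is
\begin{equation*}
(\wp(x)-\wp(y))\sum_{i=0}^{n-1}\pi_{2i}(x)\pi_{2i}(y)=\alpha_{n}\bigl[\pi_{2n}(x)\pi_{2n-2}(y)-\pi_{2n}(y)\pi_{2n-2}(x)\bigr].
\end{equation*}
Dividing by $\wp(x)-\wp(y)$ (which, as in the classical argument, is to be understood in the limiting sense when $x=y$) and multiplying through by $\sqrt{w(x)w(y)}$ yields precisely the identity in the proposition.

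There is no genuine obstacle here: the derivation is a one-to-one elliptic translation of the classical Szeg\H{o}--Darboux telescoping, and the symmetric-weight hypothesis is what guarantees that the even polynomials close under multiplication by $\wp$, so that only a three-term (rather than five-term) recurrence enters. The only points deserving a line of verification are that the index convention for the Jacobi coefficients $\alpha_{n}$ matches that used in the statement (which amounts to a relabelling of the $a_{j}$ from Theorem \ref{thm:five-term rec}) and that $\alpha_{n}>0$, so the formula is non-degenerate; both are immediate from the positivity $a_{j}>0$ established in Theorem \ref{thm:five-term rec}.
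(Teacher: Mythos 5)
Your proposal is correct and is exactly the argument the paper has in mind: the paper states that this proposition follows from the three-term recurrence \eqref{3term2} via standard theory, and your telescoping with $\wp$ replacing the classical linear multiplier, the vanishing boundary term $D_0$ from $\pi_{-2}=0$, and the identification $\alpha_n=a_{2n-1}>0$ is precisely that standard derivation. No gaps.
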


\begin{lemma}
    The determinant of the correlation kernel can be expressed as 
    \begin{equation}
          \frac{1}{n!}\det [K_n(x_i,x_j)]_{i,j=1}^{n}=\frac{1}{Z_{n}}\prod_{1\leq i<j\leq n}|\wp(x_{i})-\wp(x_{j})|^{2}\prod_{i=1}^{n}w(x_{i}),
    \end{equation}
    where the partition function $Z_n$ is given by 
    \begin{equation}
        Z_{n}=\int_{\gamma}\dots\int_{\gamma}\prod_{1\leq i<j\leq n}|\wp(x_{i})-\wp(x_{j})|^{2}\prod_{i=1}^{n}w(x_{i})dx_{1}\dots dx_{n}.
    \end{equation}
\end{lemma}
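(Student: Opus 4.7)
The plan is to factor the correlation kernel as a product $MM^{\mathsf{T}}$, exploit the structural fact that each even orthonormal polynomial is a polynomial of exact degree $k$ in $\wp(z)$ to extract a Vandermonde-type determinant in the variables $\wp(x_i)$, and then fix the overall constant through integration using Andréief's identity and orthonormality.

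First, I would write $K_n(x,y)=\sum_{k=0}^{n-1}\phi_k(x)\phi_k(y)$ with $\phi_k(x):=\sqrt{w(x)}\,\pi_{2k}(x)$, which immediately gives
\[
[K_n(x_i,x_j)]_{i,j=1}^{n}=MM^{\mathsf{T}},\qquad M_{ik}=\phi_{k-1}(x_i),
\]
and hence $\det[K_n(x_i,x_j)]=(\det M)^2$. Pulling $\sqrt{w(x_i)}$ out of the $i$th row yields $\det M=\prod_{i=1}^{n}\sqrt{w(x_i)}\cdot\det[\pi_{2(k-1)}(x_i)]_{i,k=1}^{n}$.

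Next, I would invoke the identification established in Section~\ref{sec:symm-weight} that $\pi_{2k}(z)$ is a polynomial of exact degree $k$ in $\wp(z)$, with leading coefficient $\ell_k=1/\sqrt{h_{2k}}$. Column operations that eliminate the lower-degree-in-$\wp$ terms reduce the determinant to a Vandermonde:
\[
\det\bigl[\pi_{2(k-1)}(x_i)\bigr]_{i,k=1}^{n}
=\Bigl(\prod_{k=0}^{n-1}\ell_k\Bigr)\det\bigl[\wp(x_i)^{k-1}\bigr]_{i,k=1}^{n}
=\Bigl(\prod_{k=0}^{n-1}\ell_k\Bigr)\prod_{1\le i<j\le n}\bigl(\wp(x_j)-\wp(x_i)\bigr).
\]
Squaring and collecting factors gives $\det[K_n(x_i,x_j)]=C_n\prod_{i=1}^{n}w(x_i)\prod_{i<j}|\wp(x_i)-\wp(x_j)|^2$ with $C_n=(\prod_{k=0}^{n-1}\ell_k)^2$.

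Finally, to identify $C_n$ I would integrate and apply the Andréief (generalized Cauchy--Binet) identity combined with the orthonormality of $\{\pi_{2k}\}$:
\[
\int_{\gamma}\!\!\cdots\!\!\int_{\gamma}\frac{1}{n!}\det[K_n(x_i,x_j)]\,dx_1\cdots dx_n
=\det\!\left[\int_{\gamma}\phi_{i-1}(x)\phi_{j-1}(x)\,dx\right]_{i,j=1}^{n}
=\det[\delta_{ij}]=1.
\]
Substituting the factorization of the previous step forces $\frac{C_n}{n!}Z_n=1$, hence $C_n=n!/Z_n$, which yields the stated formula. The only non-routine point is the column-reduction step, which depends on the explicit representation of the even EOPs as polynomials of exact degree $k$ in $\wp$ with known leading coefficient; once this is invoked the remainder is exactly the classical $\beta=2$ biorthogonal calculation with $z\mapsto\wp(z)$.
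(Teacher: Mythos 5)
Your proof is correct and follows exactly the standard route the paper itself invokes: it cites \cite{forrester2010log} rather than writing out the argument, and the standard argument is precisely your $MM^{\mathsf T}$ factorization, reduction of $\det[\pi_{2(k-1)}(x_i)]$ to a Vandermonde in $\wp(x_i)$ (valid since $\pi_{2k}$ has exact degree $k$ in $\wp$ and is real on $\gamma$), and normalization via Andr\'eief's identity. No substantive differences to report.
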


\begin{remark}[A curious identity]
We assume $w(x)\equiv 1$. In that case we note $\pi_{n}'(x)$ is a polynomial of degreen $n+1$. If we expand in the orthonormal polynomial basis we can write 
\begin{equation}
    \pi_n'(x)=\sum_{j=0}^{n+1}c_{j,n}\pi_{j}(x),\qquad c_{j,n}=\int_\gamma \pi_{n}'(x)\pi_{j}(x)dx.
\end{equation}
Then integration by parts, and taking into account the double periodicity of the polynomials, we find that $c_{j,n}=0$ for $j\leq n-2$. As a corollary we then obtain 
\begin{equation}
    \pi_n'(x)=\sum_{j=n-1}^{n+1}c_{j,n}\pi_{j}(x).
\end{equation}
\end{remark}
\subsection*{Acknowledgements}
We thank Peter Forrester and Arno Kuijlaars for their insights and meaningful discussions.
H.D acknowledges the generous support of the SMRI Postdoctoral Fellowship.
S.L. acknowledges financial support from the International Research Training Group (IRTG) between KU Leuven and University of Melbourne and Melbourne Research Scholarship of University of Melbourne.
\appendix
\section{Simplicity of zeros}\label{simplezero}
Let us first start by proving the analogue of \cite[Theorem 2.3, 2.4]{bertola2022nonlinear} for our EOPs \footnote{The number of zeros and the proof of this statement are modified in our case as the definition of polynomials in our case differ from that of \cite{bertola2022nonlinear}.}.
\begin{theorem}\label{zerothm}
    Let \( n \) be even. Then \( \pi_n \) has exactly \( n \) zeros on \( \gamma \). If \( n \) is odd, \( \pi_n \) has \( n-1 \)  zeros on \( \gamma \), and the remaining zero lies on the interval $[0,1]$.
\end{theorem}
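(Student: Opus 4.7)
The plan is to combine the classical orthogonality-based sign change argument with the Abel--Jacobi constraint for elliptic functions, with a parity observation that distinguishes the even and odd cases.

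\smallskip
\textbf{Parity of sign changes.} Since $\wp$ and $\wp'$ are real on $\gamma$ and $w$ is real and positive, $\pi_n$ is continuous and real-valued on $\gamma$. Crucially, $\gamma$ is a closed loop on $\mathbb{T}$: the endpoints $\tau/2$ and $1+\tau/2$ are identified modulo the lattice and $\pi_n$ takes the same value there by double periodicity. Hence the number $s$ of sign changes of $\pi_n$ on $\gamma$ is necessarily even. Moreover, $\pi_n$ is elliptic of order $n$, so it has exactly $n$ zeros on $\mathbb{T}$ counted with multiplicity, and by Abel--Jacobi the sum of these zeros is $\equiv 0 \pmod{\mathbb{Z}+\tau\mathbb{Z}}$.

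\smallskip
\textbf{Lower bound via orthogonality.} Let $z_1,\ldots,z_s\in\gamma$ be the sign-change points of $\pi_n$, and assume for contradiction that $s+1\le n-1$. Using Weierstrass $\sigma$-functions, set
\[
q(z) \;=\; \frac{\sigma(z-z_1)\cdots\sigma(z-z_s)\,\sigma(z-a)}{\sigma(z)^{s+1}},\qquad a := -(z_1+\cdots+z_s) \pmod{\mathbb{Z}+\tau\mathbb{Z}}.
\]
The sum of zeros equals the sum of poles, so $q$ is a genuine elliptic polynomial with a unique pole of order $s+1$ at $0$. Writing each $z_i=\tfrac{\tau}{2}+t_i$ with $t_i\in[0,1]$ real and using that $s$ is even, the imaginary part of $a$ is $-s\cdot\tfrac{\mathrm{Im}(\tau)}{2}\equiv 0 \pmod{\mathrm{Im}(\tau)}$. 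Consequently $a$ reduces to a point of $[0,1]$ in the fundamental domain, lies off $\gamma$, and is automatically distinct from every $z_i$. Therefore the only sign changes of $q$ on $\gamma$ are precisely the $z_i$, matching those of $\pi_n$. After adjusting the overall sign of $q$, one has $\pi_n\, q\ge 0$ on $\gamma$ with strict positivity on a set of positive measure, so $\int_\gamma \pi_n\,q\,w\,dz>0$. On the other hand $q$ has pole order $s+1\le n-1$, so orthogonality of $\pi_n$ against all elliptic polynomials of degree $\le n-1$ forces the integral to vanish---a contradiction. Hence $s\ge n-1$.

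\smallskip
\textbf{Conclusion.} For $n$ even, $s$ is even and the bounds $n-1\le s\le n$ force $s=n$. Since distinct sign changes contribute distinct simple zeros, $Z_\gamma\ge s=n$; combined with $Z_\gamma\le n$, this yields $Z_\gamma=n$ with all $n$ zeros simple and on $\gamma$. For $n$ odd, $s$ is even while $n$ is odd, so $s\le n-1$; combined with the lower bound this gives $s=n-1$. Then $Z_\gamma\ge n-1$, while $Z_\gamma=n$ is incompatible with Abel--Jacobi (the sum of $n$ points on $\gamma$ has imaginary part $n\cdot\tfrac{\mathrm{Im}(\tau)}{2}\not\equiv 0 \pmod{\mathrm{Im}(\tau)}$). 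So $Z_\gamma=n-1$ and there is a unique additional zero $z_n\in\mathbb{T}$ satisfying $z_n\equiv-\sum_{i=1}^{n-1}z_i\pmod{\mathbb{Z}+\tau\mathbb{Z}}$; since $n-1$ is even, $z_n$ has imaginary part $\equiv 0\pmod{\mathrm{Im}(\tau)}$, and therefore lies on $[0,1]$ in the fundamental domain.

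\smallskip
\textbf{Main obstacle.} The delicate step is the $\sigma$-function construction together with verifying that the auxiliary zero $a$ automatically avoids $\gamma$. This relies essentially on the parity of $s$, which is itself a consequence of $\gamma$ being a closed cycle on $\mathbb{T}$. This parity input is precisely what distinguishes the elliptic setting from the classical OP argument and makes the even and odd cases of the theorem fall out uniformly.
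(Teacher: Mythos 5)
Your proof is correct and follows the same overall architecture as the paper's: an evenness constraint along $\gamma$, an orthogonality contradiction against an auxiliary elliptic polynomial of degree at most $n-1$ vanishing at the relevant points, and Abel's theorem to place the leftover zero when $n$ is odd. The one genuinely different ingredient is how you obtain the parity. You use the topological fact that $\gamma$ is a closed cycle on $\mathbb{T}$, so a real continuous function on it has an even number of sign changes; the paper instead derives evenness of the number of zeros on $\gamma$ from Abel's theorem combined with conjugation symmetry (real coefficients force off-$\gamma$, off-real-axis zeros to pair up, so the imaginary parts of the zero divisor can only cancel if the count on $\gamma$ is even). Your route is more elementary, and because it counts sign changes rather than zeros with multiplicity, the nonnegativity of $\pi_n q$ on $\gamma$ is automatic --- the paper has to assert that the auxiliary polynomial matches the zeros of $\pi_n$ ``with the same multiplicity,'' which is the slightly delicate point in its version. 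As a by-product your argument also yields simplicity of the zeros on $\gamma$, which the paper establishes in a separate corollary. Two small points to flag: when $s=0$ the $\sigma$-quotient degenerates (there is no elliptic function with a single simple pole), so you should note that $q\equiv 1$ in that case --- which is exactly what your formula produces since then $a=0$; and if $a\equiv 0$ for $s>0$ the pole order of $q$ drops to $s$, which is harmless for the orthogonality argument but deserves a word.
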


\begin{proof}
   We begin with the fact that the polynomial $\pi_n(z)$ has $n$ zeros, as it an elliptic function with an $n^{th}$ order pole. Furthermore, by Abel's theorem \cite{miranda}, the sum of the divisors of the zeros must be zero, and since the only pole of EOPs is at $z=0$, the sum of all zeros of $\pi_n(z)$ must be zero modulo the lattice. 
   
   Let us now denote the zeros of \( \pi_n \) on \( \gamma \) by \( s_1, \dots, s_k \), the zeros on the interval $[0,1]$ by \( t_1, \dots, t_l \), and the zeros elsewhere by \( x_1, \dots, x_j \), where $k+l+j =n$.  Since the polynomial \( \pi_n \) has real coefficients, for each zero \( x_i \) on the torus, its complex conjugate \( \overline{x_i} \) is also a zero. This symmetry implies that the sum of the imaginary parts of the \( x_i \)'s is zero. Similarly, the sum of the imaginary parts of the zeros \( t_i \) on the real line is trivially zero. Finally, since the imaginary part of \( \gamma \) is \( \Im(\gamma) = \frac{\tau}{2} \), the number of zeros on \( \gamma \) must be even.

    Let \( k=2m \leq n-2 \) denote the number of zeros on \( \gamma \). In this case, we can construct an elliptic polynomial \( P_{2m+1} \) with zeros at \( s_1, \dots, s_{2m} \). By Abel’s theorem, the location of the remaining zero is uniquely determined, and it must lie on the real line. Since \( P_{2m+1} \) is a polynomial of degree less than \( n-1 \), its zeros on $\gamma$ match those of \( \pi_n \) with the same multiplicity.

    However, this leads to a contradiction. Specifically, we have
    \[
    \int_{\gamma} \pi_n(z) P_{2m+1}(z) \, w(z)\, dz > 0,
    \]
    which contradicts the orthogonality condition.
    Now, suppose \( n \) is even. In this case, the number of zeros on \( \gamma \) must be either \( n \) or \( n-1 \). However, the number of zeros must be even, so it cannot be \( n-1 \). Thus, \( \pi_n \) must have exactly \( n \) zeros on \( \gamma \). Finally, suppose \( n \) is odd. By the same reasoning, \( \pi_n \) must have exactly \( n-1 \) zeros on \( \gamma \). Abel's theorem then implies that the remaining zero must lie on the real line.
\end{proof}

\begin{corollary}
    The zeros of \( \pi_n \) are all simple.
\end{corollary}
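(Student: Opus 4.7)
The plan is to prove simplicity by a direct zero-count on the torus, using that $\pi_n$ is a doubly periodic meromorphic function whose only pole is an $n$-fold pole at the origin. By the standard fact that the degree of the divisor of any nonzero elliptic function is zero (equivalently, by Abel's theorem applied to $\pi_n$ on $\mathbb{T}$), the total number of zeros of $\pi_n$ on the fundamental domain, counted with multiplicity, is exactly $n$.

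Next I would invoke Theorem \ref{zerothm} to produce $n$ distinct points at which $\pi_n$ vanishes. Concretely, for $n$ even the theorem exhibits $n$ zeros on $\gamma$, while for $n$ odd it exhibits $n-1$ zeros on $\gamma$ together with one zero on $[0,1]$. In either case one obtains $n$ distinct zero locations on $\mathbb{T}$; crucially, these locations are distinct because $\gamma$ and $[0,1]$ are disjoint except at the boundary, and Abel's theorem in the odd case pins down the extra zero to a point off $\gamma$. Since each of these $n$ distinct zeros contributes multiplicity at least one, and the total multiplicity is capped at $n$, every listed zero must be of multiplicity exactly one. Hence all zeros of $\pi_n$ are simple.

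The one subtlety I would need to address carefully is to confirm that the count of zeros extracted from the proof of Theorem \ref{zerothm} is indeed a count of distinct points (rather than a count of sign-changes, which would only detect odd multiplicities). If one reads that proof as producing sign-changing zeros, the same conclusion follows with a mild refinement: each sign change contributes an odd multiplicity $\geq 1$, so $n$ (resp.\ $n-1$) sign changes contribute a total multiplicity $\geq n$ (resp.\ $\geq n-1$) on $\gamma$, which combined with the extra zero on $[0,1]$ in the odd case already saturates the bound $n$. Any multiplicity exceeding $1$ at any of these points would push the total strictly above $n$, contradicting the divisor identity on $\mathbb{T}$.

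I do not expect this argument to have a serious obstacle beyond this bookkeeping point: the heavy lifting has been done in Theorem \ref{zerothm} through the orthogonality-based sign argument, and the corollary is essentially a pigeonhole consequence of matching the total zero count $n$ against $n$ distinct prescribed locations.
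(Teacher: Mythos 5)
There is a genuine gap. Your argument hinges on reading Theorem \ref{zerothm} as producing $n$ (respectively $n-1$) \emph{distinct} zeros on $\gamma$, but both the statement and the proof of that theorem count zeros \emph{with multiplicity}: the proof lists the zeros as $s_1,\dots,s_k$ on $\gamma$, $t_1,\dots,t_l$ on $[0,1]$ and $x_1,\dots,x_j$ elsewhere with $k+l+j=n$, and the orthogonality argument there only bounds from below the total multiplicity carried by $\gamma$. Consequently the pigeonhole step fails: a configuration with one double zero and $n-2$ simple zeros on $\gamma$ also has total multiplicity $n$ on $\gamma$, so matching the count against the divisor identity $\deg(\pi_n)_0=n$ excludes nothing. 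Your distinctness remark only separates the $\gamma$-zeros from the $[0,1]$-zero; it says nothing about distinctness among the $\gamma$-zeros themselves, which is the whole content of the corollary. The fallback via sign changes has the same defect in reverse: it presupposes that $\pi_n$ has $n$ (resp.\ $n-1$) distinct sign changes on $\gamma$, which is precisely the distinctness statement that is not available, and an even-order zero is not a sign change at all, so it would simply be invisible to that count.

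The paper closes this gap with a second orthogonality argument rather than with counting. Assuming some zero has multiplicity at least $2$, the distinct zeros of odd multiplicity --- the only points where $\pi_n$ changes sign --- number at most $n-2$; one then constructs an elliptic polynomial of degree at most $n-2$ with simple zeros exactly at those points (Abel's theorem supplying the one extra zero needed to make the divisor admissible), so that the product with $\pi_n$ has even order at every common zero, is single-signed on $\gamma$, and yields $\int_\gamma \pi_n(z)P_k(z)\,w(z)\,dz>0$, contradicting orthogonality to all elliptic polynomials of degree less than $n$. If you wish to keep your structure, you must insert this step (or an equivalent one) to upgrade the with-multiplicity count of Theorem \ref{zerothm} to a count of distinct points; the divisor identity alone cannot do it.
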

\begin{proof}
    Suppose, for contradiction, that \( \pi_n \) has a zero of multiplicity greater than one. Let \( z_1, \dots, z_k \) be the distinct zeros of \( \pi_n \) that have odd multiplicity strictly greater than one. Since the total number of such zeros must be even, it follows that \( k \) is even.

    Consider first the case where \( n \) is odd. From Theorem \ref{zerothm}, we know that \( \pi_n \) has exactly \( n-1 \)  zeros on \( \gamma \) and a single zero on the real line. In particular, if not all zeros are simple, then \( k \leq n - 3 \). 

    We now construct an elliptic polynomial  \( P_k \), on \( \mathbb{T} \), with simple zeros at \( z_1, \dots, z_k \). By Abel’s theorem, the location of the remaining pole is uniquely determined so as to ensure that the divisor of \( P_k \) sums to zero. Since the divisor of \( P_k \) includes exactly \( k + 1 \) simple zeros, \( P_k \) is of degree at most \( k + 1 \leq n - 2 \).

    Since \( \pi_n \) and \( P_k \) vanish at the same points (modulo multiplicity), their product does not change sign on \( \gamma \), and in particular,
    \[
        \int_\gamma \pi_n(z) P_k(z) \, w(z)\, dz > 0,
    \]
    which contradicts the orthogonality of \( \pi_n \) with all functions of degree less than \( n \).

    The case where \( n \) is even is treated analogously: any non-simple zero yields a construction of \( P_k \) of degree at most \( n - 2 \), leading to the same contradiction.

    Hence, all zeros of \( \pi_n \) must be simple.
\end{proof}
\section{Relation between recurrence coefficients}\label{App:rel_rec_coeff}

Our goal is to relate the recurrence coefficients $\{a_n\},\{b_n\}$ and $\{c_n\}$ in the 5 term recurrence relation with the recurrence coeffcients $\{p_n\},\{q_n\},\{r_n\},\{s_n\}$ from the 7 term recurrence relation. We begin with the Weierstrass elliptic curve equation:
\begin{equation} \label{eq:elliptic_curve}
\wp'(z)^2 = \wp(z)^3 - g_2 \wp(z) - g_3.
\end{equation}
This identity allows one to express \( \wp'(z)^2 \pi_n(z) \) in terms of \( (\wp(z)^3 - g_2 \wp(z) - g_3)\pi_n(z) \), which, when expanded via recurrence relations, yields polynomial combinations of \( \pi_{n+k}(z) \) for \( -6 \le k \le 6 \). Matching coefficients of like terms gives algebraic identities between recurrence coefficients.

More precisely we can obtain 
\begin{equation}
   \left( \wp(z)^3-g_{2}\wp(z)-g_3\right)\pi_{n}(z)=\sum_{i=-6}^{6}B_{n+i}\pi_{n+i}.
\end{equation}
by using the five-term recurrence relation thrice. The computation is tedious and done by computer software. 

\begin{align*}
B_{n-6} &= 4 a_{n-5} a_{n-3} a_{n-1}, \\
B_{n-5} &= 4 (a_{n-3} b_{n+2} + a_{n+1} b_{n-2}) a_{n-4} + 4 a_{n-3} a_{n+1} b_{n-4}, \\
B_{n-4} &= 4 (a_{n-3} b_{n+2} + a_{n+1} b_{n-2}) b_{n-3} + 4 (a_{n-3} c_{n+2} + a_{n+1} c_{n-2} + b_{n-2} b_{n+2}) a_{n-3} \\&\quad + 4 a_{n-3} a_{n+1} c_{n-4}, \\
B_{n-3} &= 4 (a_{n-3} b_{n+2} + a_{n+1} b_{n-2}) c_{n-3} + 4 (a_{n-3} c_{n+2} + a_{n+1} c_{n-2} + b_{n-2} b_{n+2}) b_{n-2} \\&\quad + 4 (a_{n+1} b_{n-1} + b_n c_{n+2} + b_{n+2} c_{n-1}) a_{n-2} + 4 a_{n-3} a_{n+1} b_{n-3}, \\
B_{n-2} &= -g_2 a_{n-1} + 4 (a_{n-3} b_{n+2} + a_{n+1} b_{n-2}) b_{n-2} + 4 (a_{n-3} c_{n+2} + a_{n+1} c_{n-2} + b_{n-2} b_{n+2}) c_{n-2} \\&\quad + 4 (a_{n-1} a_{n+1} + b_{n-1} b_{n+2} + c_n c_{n+2}) a_{n-1} + 4 (a_{n+1} b_{n-1} + b_n c_{n+2} + b_{n+2} c_{n-1}) b_{n-1} \\&\quad + 4 a_{n-3}^2 a_{n+1}, \\
B_{n-1} &= -g_2 b_n + 4 (a_{n-3} b_{n+2} + a_{n+1} b_{n-2}) a_{n-2} + 4 (a_{n-3} c_{n+2} + a_{n+1} c_{n-2} + b_{n-2} b_{n+2}) b_{n-1} \\&\quad + 4 (a_{n-1} a_{n+1} + b_{n-1} b_{n+2} + c_n c_{n+2}) b_n + 4 (a_{n-1} b_{n+2} + b_{n+1} c_{n+2} + b_{n+3} c_n) a_n \\&\quad + 4 (a_{n+1} b_{n-1} + b_n c_{n+2} + b_{n+2} c_{n-1}) c_{n-1}, \\
B_n &= -g_2 c_n - g_3 + 4 (a_{n-3} c_{n+2} + a_{n+1} c_{n-2} + b_{n-2} b_{n+2}) a_{n-1} \\&\quad + 4 (a_{n-1} a_{n+1} + b_{n-1} b_{n+2} + c_n c_{n+2}) c_n + 4 (a_{n-1} b_{n+2} + b_{n+1} c_{n+2} + b_{n+3} c_n) b_{n+1} \\&\quad + 4 (a_{n+1} b_{n-1} + b_n c_{n+2} + b_{n+2} c_{n-1}) b_n + 4 (a_{n+1} c_{n+2} + a_{n+3} c_n + b_{n+1} b_{n+3}) a_{n+1}, \\
B_{n+1} &= -g_2 b_{n+1} + 4 (a_{n+1} b_{n+3} + a_{n+3} b_{n+1}) a_{n+2} + 4 (a_{n-1} a_{n+1} + b_{n-1} b_{n+2} + c_n c_{n+2}) b_{n+1} \\&\quad + 4 (a_{n-1} b_{n+2} + b_{n+1} c_{n+2} + b_{n+3} c_n) c_{n+1} + 4 (a_{n+1} b_{n-1} + b_n c_{n+2} + b_{n+2} c_{n-1}) a_n \\&\quad + 4 (a_{n+1} c_{n+2} + a_{n+3} c_n + b_{n+1} b_{n+3}) b_{n+2}, \\
B_{n+2} &= -g_2 a_{n+1} + 4 (a_{n+1} b_{n+3} + a_{n+3} b_{n+1}) b_{n+3} + 4 (a_{n-1} a_{n+1} + b_{n-1} b_{n+2} + c_n c_{n+2}) a_{n+1} \\&\quad + 4 (a_{n-1} b_{n+2} + b_{n+1} c_{n+2} + b_{n+3} c_n) b_{n+2} + 4 (a_{n+1} c_{n+2} + a_{n+3} c_n + b_{n+1} b_{n+3}) c_{n+2} \\&\quad + 4 a_{n+1} a_{n+3}^2, \\
B_{n+3} &= 4 (a_{n+1} b_{n+3} + a_{n+3} b_{n+1}) c_{n+3} + 4 (a_{n-1} b_{n+2} + b_{n+1} c_{n+2} + b_{n+3} c_n) a_{n+2} \\&\quad + 4 (a_{n+1} c_{n+2} + a_{n+3} c_n + b_{n+1} b_{n+3}) b_{n+3} + 4 a_{n+1} a_{n+3} b_{n+4}, \\
B_{n+4} &= 4 (a_{n+1} b_{n+3} + a_{n+3} b_{n+1}) b_{n+4} + 4 (a_{n+1} c_{n+2} + a_{n+3} c_n + b_{n+1} b_{n+3}) a_{n+3} + 4 a_{n+1} a_{n+3} c_{n+4}, \\
B_{n+5} &= 4 (a_{n+1} b_{n+3} + a_{n+3} b_{n+1}) a_{n+4} + 4 a_{n+1} a_{n+3} b_{n+5}, \\
B_{n+6} &= 4 a_{n+1} a_{n+3} a_{n+5}.
\end{align*}
Similarly we compute 

\begin{equation}
    \wp'(z)^2\pi_{n}(z)=\sum_{i=-6}^{6}B_{n+i}\pi_{n+i}
\end{equation}
using the 7 term recurrence relation twice.
\begin{align*}
B_{n+6} &= p_{n+3} p_{n+6}, \\
B_{n+5} &= p_{n+3} q_{n+5} + q_{n+2} p_{n+5}, \\
B_{n+4} &= p_{n+3} r_{n+4} + q_{n+2} q_{n+4} + r_{n+1} p_{n+4}, \\
B_{n+3} &= p_{n+3} s_{n+3} + q_{n+2} r_{n+3} + r_{n+1} q_{n+3} + s_n p_{n+3}, \\
B_{n+2} &= q_{n+2} s_{n+2} + r_{n+1} r_{n+2} + s_n q_{n+2} + r_n p_{n+2}+p_{n+3}r_{n+3}, \\
B_{n+1} &= r_{n+1} s_{n+1} + s_n r_{n+1} + r_n q_{n+1} + q_n p_{n+1}+q_{n+2}r_{n+2}+p_{n+3}q_{n+3}, \\
B_n &= p_{n+3}^2 + q_{n+2}^2 + r_{n+1}^2 + s_n^2 + r_n^2 + q_n^2 + p_n^2, \\
B_{n-1} &= s_n r_n + r_n s_{n-1} + q_n q_{n+1} + p_n r_{n-1}+p_{n}q_{n-1}+q_{n}r_{n-1}, \\
B_{n-2} &= s_n q_n + r_n r_{n-1} + q_n s_{n-2} + p_n q_{n-1}, \\
B_{n-3} &= s_n p_n + r_n q_{n-1} + q_n r_{n-2} + p_n s_{n-3}, \\
B_{n-4} &= r_n p_{n-1} + q_n q_{n-2} + p_n r_{n-3}, \\
B_{n-5} &= q_n p_{n-2} + p_n q_{n-3}, \\
B_{n-6} &= p_{n} p_{n-3}.
\end{align*}

\bibliographystyle{plainnat} 
\bibliography{biblio}

\end{document}